\pdfoutput=1
\documentclass{article}
\usepackage{amsmath,amsthm,amssymb, ascmac,array,mathtools}
\usepackage{bm}

\mathtoolsset{showonlyrefs}
\usepackage{fullpage}
\newtheorem{theorem}{Theorem}
\newtheorem{example}{Example}
\newtheorem{lemma}{Lemma}
\newtheorem{corollary}{Corollary}

\theoremstyle{definition}
\newtheorem{definition}{Definition}

\newcommand{\R}{\mathbb{R}}
\DeclareMathOperator*{\argmin}{arg\,min}
\newcommand{\cM}{\bm{M}}
\newcommand{\cC}{\bm{C}}
\newcommand{\cA}{\bm{A}}

\usepackage[utf8]{inputenc}
\usepackage[sort,numbers]{natbib}
\usepackage[switch]{lineno}
\usepackage[hidelinks]{hyperref}

\usepackage{authblk}
\usepackage{booktabs}

\title{\mbox{Envy-freeness and maximum Nash welfare} \mbox{for mixed divisible and indivisible goods}}
\author[1]{Koichi Nishimura}
\affil{CRESCO LTD.}
\author[2]{Hanna Sumita}
\affil{Tokyo Institute of Technology}
\date{}
\begin{document}
\maketitle

\begin{abstract}
We study fair allocation of resources consisting of both divisible and indivisible goods to agents with additive valuations. 
When only divisible or indivisible goods exist, it is known that an allocation that achieves the maximum Nash welfare (MNW) satisfies the classic fairness notions based on envy.
Moreover, the literature shows the structures and characterizations of MNW allocations when valuations are binary and linear (i.e., divisible goods are homogeneous).
In this paper, we show that when all agents' valuations are binary linear, an MNW allocation for mixed goods satisfies the envy-freeness up to any good for mixed goods (EFXM).
This notion is stronger than an existing one called envy-freeness for mixed goods (EFM), and our result generalizes the existing results for the case when only divisible or indivisible goods exist.
When all agents' valuations are binary over indivisible goods and identical over divisible goods (e.g., the divisible good is money), we extend the known characterization of an MNW allocation for indivisible goods to mixed goods, and also show that an MNW allocation satisfies EFXM.
For the general additive valuations, we also provide a formal proof that an MNW allocation satisfies a weaker notion than EFM.
\end{abstract}

\section{Introduction}

Fair allocation of finite goods has been attracting attention for decades from many research communities such as economics and computer science. 
Most of the literature on fair allocation can be categorized into two types. 
One is the research that deals with heterogeneous and infinitely divisible goods.
In this case, the fair allocation problem is also called cake cutting by merging all goods into one cake. 
The other deals with indivisible goods.
In this paper, we focus on the case where agents' valuations over goods are additive.

A standard fairness notion is \emph{envy-freeness} (EF)~\cite{Foley}.
This property requires that no agent prefers another agent's bundle to her own bundle.
It is well known that for divisible goods, there exists an envy-free allocation such that all goods are distributed~\cite{Varian}, and such an allocation can be found by a discrete and bounded protocol~\cite{Aziz2016}.
On the other hand, for indivisible goods, such an envy-free allocation may not exist.
For example, when there is a single good and there are two agents, either agent can receive the good, and the agent with no good envies the other.
Thus, an approximate fairness notion called \emph{envy-freeness up to one good} (EF1) has been proposed~\cite{Budi11a,Lipton2004}.
An allocation is said to be EF1 if we can eliminate the envy of any agent towards another one by removing some good from the envied agent's bundle.
It is known that an EF1 allocation always exists and can be found in polynomial time~\cite{Lipton2004} under additive valuations.
A similar but stronger notion of EF1 called \emph{envy-freeness up to any good} (EFX) has also been well studied~\cite{Caragiannis2019}.
An allocation is said to be EFX if we can eliminate the envy of any agent towards another one by removing any good from the envied agent's bundle.
It is known that there exists an EFX allocation for special cases~\cite{Amanatidis2021,Chaudhury2020,Mahara2021,Mahara2020,Plaut2020} while the existence is open even for four agents with additive valuations.

Another prominent fairness notion is based on the Nash welfare, which is a geometric mean of all utilities in an allocation.
An allocation is said to be a \emph{maximum Nash welfare} (MNW) allocation if it maximizes the number of agents with positive utilities among all allocations and subject to that, it maximizes the Nash welfare of such agents. 
While maximizing the Nash welfare aims to be socially optimal, it attains other efficiency and fairness among individuals.
By the definition, an MNW allocation satisfies an efficiency notion called \emph{Pareto optimality}~(PO), which means that no agent gain more without hurting others.
As for fairness, it is known that any MNW allocation is EF~\cite{Varian, Segal-Halevi2019} for divisible goods. For indivisible goods, every MNW allocation is EF1~\cite{Caragiannis2019}, and moreover EFX~\cite{Amanatidis2021} when valuations of agents are \emph{binary} (i.e., the value for each good is $0$ or $1$) or bi-valued (i.e., there are at most two possible values for the goods).
When valuations are binary, MNW allocations have an interesting characterization.
Namely, the following three conditions are equivalent for any allocation of indivisible goods~\cite{Benabbou2021,FM2022a}: (i) the Nash welfare is maximized, (ii) the allocation is leximin (i.e., the smallest utility is maximized and subject to that, the second smallest utility is maximized, and so on), and (iii) the allocation is $\Phi$-fair~\cite{hybrid}, which means that the associated utility vector minimizes a \emph{symmetric strictly convex} function $\Phi$ over utilitarian optimal allocations (i.e., one maximizing the total sum of all agents' utilities).
The same equivalence also holds for homogeneous divisible goods~\cite{fujishige1980,Maruyama1978}.

Recently, fair allocation of a mix of divisible goods and indivisible ones has been a prominent research topic since introduced by Bei et al.~\cite{Bei2021}.
See also a survey~\cite{Liu2023survey}.
In a real-world application, such a situation often occurs; for example, an inheritance may contain divisible goods (e.g., land and money) and indivisible goods (e.g., cars and houses).
In an allocation of such mixed goods, neither EF nor EF1 is a reasonable goal; the EF criterion is too strong, but the EF1 is too weak.
Thus, Bei et al.~\cite{Bei2021} introduced a fairness notion called \emph{envy-freeness for mixed goods} (EFM), which is a generalization of EF and EF1.
Intuitively, in an EFM allocation, each agent does not envy any agents with bundles of only indivisible goods for more than one indivisible good; moreover, each agent does not envy agents with bundles having only a fraction of a divisible good.
Contrary to expectation, an MNW allocation for mixed goods is incompatible even with a weaker variant of EFM~\cite{Bei2021}.
Caragiannis et al.~\cite{Caragiannis2019} mentioned that an MNW allocation for mixed goods is \emph{envy-free up to one good for mixed goods} (EF1M), that is, each agent $i$ does not envy another agent $j$ if one indivisible good is removed from the bundle of agent $j$.
Intuitively, when we regard a divisible good as a set of infinitely small indivisible goods, the EF1M criterion is equivalent to the EF1, and thus an MNW allocation satisfies EF1M.
On the other hand, in the EFM allocation, when agent $j$ receives indivisible goods and a fraction of a divisible good, envies from other agents must be eliminated by removing any ``indivisible'' good from agent $j$'s bundle.
This requirement is exactly EFX.

In this paper, we further investigate the relationship between MNW allocations and fairness notions for mixed goods.
In contrast to the general additive case, little is known about the binary valuations.
The binary valuation is a basic tool to model practical situations such as work shift scheduling\footnotemark, and furthermore MNW allocations under binary valuations also have theoretically interesting properties when either divisible or indivisible goods exist.\footnotetext{In the work shift scheduling, a manager would like to assign shifts to agents, who are willing to work (to gain money) but have heterogeneous preferences over the shifts.
A shift is valued as $1$ if the agent prefers to work in the shift, and valued as $0$ otherwise.
A shift that should be fully allocated to one agent corresponds to an indivisible good, and one that can be shared among several agents (and the wage is also divided) corresponds to a divisible good.}
For mixed goods, Kawase et al.~\cite{hybrid} showed the structure of $\Phi$-fair allocations when valuations are binary and \emph{linear}, which means that divisible goods are homogeneous.
They also provided an example that any $\Phi$-fair or leximin allocation is not an MNW one.
Li et al.~\cite{Li2023} presented a truthful mechanism that outputs an allocation satisfying EFM, MNW and leximin criteria, when the valuations are binary over indivisible goods and identical on a single divisible good (e.g., money).

\subsection{Our contribution}
Our contribution is to extend the established connections between MNW allocations and relaxed envy-freeness for either divisible or indivisible goods to the setting of mixed goods.
Specifically, our results are divided into three cases based on the properties of valuation functions.

\paragraph{(i) Binary linear valuations}
One main result is to show that when all agents' valuations are binary and linear, every MNW allocation satisfies a stronger notion than EFM called \emph{envy-freeness up to any good for mixed goods} (EFXM).
Here, we assume that the allocation of divisible goods is either empty or of positive value.
This result generalizes the existing result by Amanatidis et al.~\cite{Amanatidis2021} for indivisible goods.
We remark that the argument regarding a divisible good as a set of infinitely small indivisible goods is not directly applicable for the binary case because valuation for such infinitely small indivisible goods are not binary.
We first prove that $\Phi$-fair allocations are EFXM (Theorem~\ref{thm:phi}) by leveraging the structure of $\Phi$-fair allocations, which is established in~\cite{hybrid}.
If we appropriately set a symmetric strictly convex function $\Phi$, then the MNW criterion coincides with $\Phi$-fairness, which is detailed in Appendix~\ref{sec:MNW}.
These facts imply that MNW allocations are EFXM.

\paragraph{(ii) Binary valuations over indivisible goods and identical valuations over divisible goods}
We show that maximizing Nash welfare is equivalent to leximin and $\Phi$-fair criteria~(Theorem~\ref{thm:phi-equiv mixed goods}) when valuations are binary for indivisible goods and identical for divisible goods.
Our result generalizes the existing results~\cite{Benabbou2021,FM2022a,fujishige1980,Maruyama1978}, and our proof extends that by Benabbou et al.~\cite{Benabbou2021} for indivisible goods.
In our proof, we utilize the detailed structure of $\Phi$-fair allocations for indivisible goods to deal the divisible goods.
In addition, we show that any MNW allocation is EFXM (Theorem~\ref{thm:binary money}), which is not included in the previous case.

\paragraph{(iii) General additive valuations}
When we go beyond binary valuations, the only known fact is that MNW allocations are EF1M.
First, we provide examples (Examples~\ref{ex:bi-valued,identical on C} and~\ref{ex:bi-valued,identical on M}) such that valuations are bi-valued but any MNW allocation is not even weak EFM.
This is in contrast to the known result that any MNW allocation is EFX when only indivisible goods exists~\cite{Amanatidis2021}.
Second, we show that MNW allocations are EFXM for identical valuations (Theorem~\ref{thm:identical}).
Next, for the existing result on EF1M, we describe a simpler proof 
extending the analysis for divisible goods~\cite{Segal-Halevi2019}.

We summarize the connection between MNW allocation and relaxed envy-freeness in Table~\ref{tab:summary}.
We also prove the existence of an allocation that satisfies the relaxed envy-freeness and PO.
Lastly,  in Section~\ref{sec:discussion}, we discuss relationships between our relaxed envy-free notions and other recently proposed fairness notions for mixed goods.

\begin{table}[htbp]
    \centering
    \caption{Guarantee of relaxed envy-freeness for MNW allocations. The symbol $M$ stands for the set of indivisible goods and $C$ for divisible goods.}
    \label{tab:summary}
    \begin{tabular}{c|ccc}
    \toprule
        Valuations & \multicolumn{3}{c}{Goods} \\
        & $M=\emptyset$ & $C=\emptyset$ & mixed \\
        \midrule
        binary linear
        & EF~\cite{Segal-Halevi2019} & EFX~\cite{Amanatidis2021} & EFXM (Theorem~\ref{thm:phi}) \\
        binary ($M$), identical ($C$)
        & EF~\cite{Segal-Halevi2019} & EFX~\cite{Amanatidis2021} & EFXM (Theorem~\ref{thm:binary money}) \\
        identical additive & EF~\cite{Segal-Halevi2019} & EFX~\cite{Plaut2020} & EFXM (Theorem~\ref{thm:identical}) \\
        bi-valued linear & EF~\cite{Segal-Halevi2019} & EFX~\cite{Amanatidis2021} & EF1M~\cite{Caragiannis2019} (Theorem~\ref{thm:EF1M})\\
        general addtive & EF~\cite{Segal-Halevi2019} & EF1~\cite{Caragiannis2019} & EF1M~\cite{Caragiannis2019} (Theorem~\ref{thm:EF1M})\\
        \bottomrule
    \end{tabular}
\end{table}

\subsection{Related work}
There is a vast literature on fair allocation, and the problem of finding an MNW allocation is one of the most studied topics in this area.
For homogeneous divisible goods (linear valuations), finding an MNW allocation is a special case of finding a market equilibrium of the Fisher market model and is formulated as the convex program of Eisenberg and Gale~\cite{Eisenberg1959}~(see, e.g., a textbook~\cite{textbook} for details).
Later, a combinatorial polynomial-time algorithm was proposed~\cite{Devanur2008} and also strongly polynomial-time ones~\cite{Orlin2010, Vegh2016}.
For indivisible goods, the problem is APX-hard even for additive valuations~\cite{Lee2017}.
Thus, approximation algorithms are intensively studied~\cite{Anari2018,Cole2015,Cole2017,Garg2018}.
On the other hand, when valuations are binary, there exist polynomial-time algorithms~\cite{Barman2018,Darman2015}, and also a truthful mechanism that returns a special MNW allocation in polynomial time~\cite{Halpern2020}.

It is worth mentioning here that many of the positive results on binary additive valuations for indivisible goods are generalized to matroid rank function.
It is shown~\cite{Benabbou2021} that the equivalence of MNW, leximin, and $\Phi$-fair allocations holds for the matorid rank functions, and any clean (i.e., removing any item from some agent's bundle decreases the utility of the agent) MNW allocation is EF1.
A truthful mechanism that returns an MNW allocation in polynomial time also exists given (polynomial-time) oracle access to the valuations~\cite{Babaioff2021}.

For mixed goods, recently Kawase et al.~\cite{hybrid} provided a polynomial-time algorithm to find an MNW allocation when valuations are binary, linear, and the divisible goods are \emph{identical}.
Their algorithm is built on the structure of $\Phi$-fair allocations, together with relationships between $\Phi$-fairness and MNW allocations.
They also showed that the problem is NP-hard even when the indivisible goods are identical.
When valuations are binary over indivisible goods and identical on a single divisible good, 
Li et al.~\cite{Li2023} presented a truthful mechanism that outputs an MNW allocation for mixed goods by utilizing the mechanism for indivisible goods~\cite{Halpern2020}.

\section{Preliminaries}\label{sec:preliminaries}
In this section, we describe the fair allocation problem and fairness notions appearing in the present paper. 
For a positive integer $k$, we denote $[k] \coloneqq \{1,2,\dots, k\}$.
Let $N = \{1,2,\dots, n\}$ be the set of $n$ agents.
Let $M$ be the set of $m$ indivisible goods, and $C$ be the set of $\ell$ divisible goods.
We denote by $E=M \cup C$ the set of \emph{mixed} goods.
We often identify the $j$th divisible good $c_j$ with an interval $I_{c_j}=[\frac{j-1}{\ell}, \frac{j}{\ell})$, and regard the entire set of divisible goods as the interval $C=[0,1)$.
A \emph{piece} of $C$ is a Borel set of $C~(=[0,1))$.
Let $\mathcal{C}$ be the family of pieces of $C$.
We denote by $\mu(C')$ the length\footnote{Formally, we define the length by the Lebesgue measure.} of a piece $C'\in \mathcal{C}$.

Each agent $i$ has a valuation function $u_i \colon 2^M \cup \mathcal{C} \to \R_+$, where $\R_+$ is the set of nonnegative real numbers.
For ease of notation, we write $u_i(g)=u_i(\{g\})$ for all indivisible goods $g\in M$.
We assume the following standard properties for \emph{additive} valuation functions.
(i) Each $u_i$ is additive over indivisible goods: $u_i(M')=\sum_{g\in M'} u_i(g)$ for each $M' \subseteq M$.
(ii) The valuation for the empty set is zero: $u_i(\emptyset)=0$ for each $i \in N$.
(iii) The valuation is countably additive over $C$: the valuation for a piece which is a countable union of disjoint subintervals is the sum of the valuation for the subintervals.
(iv) Each $u_i$ is {non-atomic\footnotemark} over $C$: any single point in $C$ has value $0$.
We remark that the property (iv) implies the divisibility: for any piece $C' \in \mathcal{C}$ and any $\alpha \in [0,1]$, there exists a piece $\tilde{C} \subseteq C'$ such that $u_i(\tilde{C}) = \alpha \cdot u_i(C')$. 
\footnotetext{It is often assumed that a valuation $u_i$ over $C$ is represented as an integral of some density function $f_i$, i.e., $u_i(C') = \int_{C'} f_i(x) \mathrm{d}x$. This assumption is slightly stronger than ours.}

We define special classes of valuation functions.
A valuation $u$ is said to be \emph{binary} if $u(g) \in \{0,1\}$ for every good $g \in E$.
We say that valuations $u_1,u_2, \dots, u_n$ are \emph{linear} if $u_i(C')=\sum_{c\in C} u_i(c) \cdot \mu(C'\cap I_{c})$ for any $i\in N$ and $C' \in \mathcal{C}$.
In other words, each divisible good is homogeneous (e.g., a cake with a single flavor) and the agents care only the quantity, while a divisible good is generally heterogeneous (e.g., a cake with multiple flavors or uneven toppings).
The valuations $u_1,u_2,\dots, u_n$ are called \emph{identical} if $u_1=u_2=\dots=u_n$.
When we say that valuations are identical over $C$, it means that $u_1(C') = u_2(C') = \dots =u_n(C')$ for any piece $C' \in \mathcal{C}$.

Throughout the paper, we consider a complete allocation, i.e., allocating all goods.
We assume that all agents have a positive value for some good, and every good receives a positive value from some agent.
If there are goods that are valued as zero by all the agents, we first decide an allocation of the other goods and next allocate the zero-valued goods to the agent with the smallest utility.

An allocation of indivisible goods in $M$ is an ordered partition $\cM=(M_1,\ldots, M_n)$ of $M$ such that each agent $i$ receives the bundle $M_i \subseteq M$.
Similarly, an allocation of divisible goods $C$ is a partition $\cC=(C_1,\ldots,C_n)$ of $C$ such that each component $C_i$ is a piece and each agent $i$ receives $C_i$.
An allocation of the mixed goods $E$ is defined as $\cA=(A_1,\dots, A_n)$ such that $A_i=M_i\cup C_i$ is a bundle for agent $i$.
The \emph{utility} of agent $i$ in an allocation $\cA$ is the sum of the valuations of $M_i$ and $C_i$; for simplicity, we write the utility of agent $i$ as $u_i(A_i)=u_i(M_i)+u_i(C_i)$.

We first define the efficiency notions.
For an allocation $\cA$ for mixed goods, 
another allocation $\cA'$ is said to \emph{Pareto dominate} $\cA$ if it satisfies that $u_i(A'_i) \geq u_i(A_i)$ for all agents $i\in N$ and $u_i(A'_i) > u_i(A_i)$ for some agent $i \in N$.
An allocation $\cA$ is called \emph{Pareto optimal} (PO) if no allocation Pareto dominates $\cA$.
An allocation $\cA$ is said to be \emph{utilitarian optimal} if the total sum of utilities $\sum_{i\in N} u_i(A_i)$ is maximized over allocations.

We consider the following fairness notions in this paper. 
\begin{definition}
An allocation $\cA$ for mixed goods is said to be \emph{envy-free} (EF) if it satisfies $u_i(A_i)\geq u_i(A_j)$ for any agents $i, j\in N$.
\end{definition}
\begin{definition}
An allocation $\cM$ for indivisible goods is said to be \emph{envy-free up to one good} (EF1) if it satisfies for any agents $i, j\in N$, either $u_i(M_i)\geq u_i(M_j)$ or $u_i(M_i) \geq u_i(M_j\setminus \{g\})$ for some $g\in M_j$. 
\end{definition}
\begin{definition}[Bei et al.~\cite{Bei2021}]
An allocation $\cA$ for mixed goods is said to be \emph{envy-free for mixed goods} (EFM) if it satisfies the following condition for any agents $i, j \in N$:
\begin{itemize}
    \item if $C_j = \emptyset$, then $u_i(A_i)\geq u_i(A_j)$
    or $u_i(A_i) \geq u_i(A_j\setminus \{g\})$ for some indivisible good $g\in A_j$,
    \item otherwise (i.e., $C_j\neq \emptyset$), $u_i(A_i)\geq u_i(A_j)$.
\end{itemize}
\end{definition}
\begin{definition}[Bei et al.~\cite{Bei2021}]
An allocation $\cA$ for mixed goods is said to be \emph{weak envy-free for mixed goods} (weak EFM) if it satisfies the following condition for any agents $i, j \in N$:
\begin{itemize}
    \item if $M_j\neq \emptyset$, and additionally either $C_j=\emptyset$ or $u_i(C_j)=0$, then 
    $u_i(A_i) \geq u_i(A_j\setminus \{g\})$ for some indivisible good $g\in A_j$,
    \item otherwise, $u_i(A_i)\geq u_i(A_j)$.
\end{itemize}
\end{definition}
\begin{definition}[Caragiannis et al.~\cite{Caragiannis2019}]
An allocation $\cA$ for mixed goods is said to be \emph{envy-free up to one good for mixed goods} (EF1M) if it satisfies the following condition for any agents $i, j\in N$:
\begin{itemize}
    \item if $M_j=\emptyset$, then $u_i(A_i)\geq u_i(A_j)$,
    \item otherwise, $u_i(A_i) \geq u_i(A_j\setminus \{g\})$ for some indivisible good $g\in A_j$.
\end{itemize}
\end{definition}

Since the EFM is based on partially the EFX criterion, we introduce a stronger variant of EFM.
We remark that while this concept is novel, the idea appears also in~\cite{Bei2021} without definition, and Li et al.~\cite{Li2023} independently introduced the same notion under the name EFM${}_{\geq 0}$ when $C$ consists of a single divisible good.
\begin{definition}
    An allocation $\cA$ for mixed goods is said to be \emph{envy-free up to any good for mixed goods} (EFXM) if it satisfies the following condition for any agents $i, j\in N$:
\begin{itemize}
    \item if $C_j = \emptyset$, then 
    $u_i(A_i)\geq u_i(A_j)$ or $u_i(A_i) \geq u_i(A_j\setminus \{g\})$ for any indivisible good $g\in A_j$.
    \item otherwise, $u_i(A_i)\geq u_i(A_j)$.
\end{itemize}
\end{definition}
By definition, EF1M is weaker than weak EFM.
It is not difficult to see that any EFXM allocation is EFM.
Hence, we can summarize the relationships between notions as
$$
\text{EF} \Rightarrow \text{EFXM} \Rightarrow \text{EFM} \Rightarrow \text{weak EFM} \Rightarrow \text{EF1M}.
$$

It is known that when agents have additive valuations, there always exists an EF allocation for divisible goods~\cite{Varian}, and an EF1 allocation for indivisible goods~\cite{Lipton2004}
The existence of EFX allocations for indivisible goods is partially known.
For mixed goods, Bei et {al.~\cite{Bei2021}\footnotemark} showed that an EFM allocation always exists for additive valuations, and also mentioned that an EFXM allocation exists whenever there exists an EFX allocation of indivisible goods.

\footnotetext{
More specifically, Bei et al.~\cite{Bei2021} assumed that each agent's valuation is represented as an integral of some density function.
Their existence results are built on a perfect allocation for any subsets of agents and any piece, and the existence of such an allocation is guaranteed also for countably additive and non-atomic valuations by the Dubins-Spanier theorem~\cite{DS1961}.
Thus, we can carry their results to our setting.
}

Finally, we introduce fairness notions based on optimizing a function value.
The \emph{Nash welfare} of an allocation $\cA$ is a geometric mean of the agents' utilities, i.e., $\left(\prod_{i\in N} u_i(A_i)\right)^{\frac{1}{n}}$.
Note that maximizing the Nash welfare is equivalent to maximizing the product of utilities.
In general, it may happen that the Nash welfare is $0$ for all allocations. 
To deal with such cases, the following definition is used in the literature.
\begin{definition}
    An allocation $\cA$ is said to be a \emph{maximum Nash welfare} (MNW) allocation if the number of agents with positive utilities is maximized among all allocations, and subject to that, the Nash welfare of those agents (equivalently, $\prod_{i\in N:\, u_i(A_i) >0} u_i(A_i)$) is maximized.
\end{definition}

Intuitively, maximizing Nash welfare aims to make the utilities of agents as even as possible.
In fact, an MNW allocation leads to EF for indivisible goods~\cite{Caragiannis2019} or EF1 for divisible goods~\cite{Segal-Halevi2019} when agents have additive valuations.

In the remainder, we focus on a notion called $\Phi$-fairness.
For an allocation $\cA$, we define the utility vector, denoted by $z(\cA) \in \R^N$, as $z(\cA)_i= u_i(A_i)$~($i\in N$).
We say that a function $\Phi\colon \R^n\to \R$ is \emph{symmetric} if 
\begin{align}
\Phi(z_1,z_2,\dots,z_n)=\Phi(z_{\sigma(1)},z_{\sigma(2)},\dots,z_{\sigma(n)})
\end{align}
for any permutations $\sigma$ of $(1,2,\dots,n)$, and a function $\Phi\colon \R^n \to\R$ is \emph{strictly convex} if 
\begin{align}
\lambda\Phi(z)+(1-\lambda)\Phi(z')> \Phi(\lambda z+(1-\lambda)z')
\end{align}
for any $z,z'\in\R^n$ and $\lambda\in(0,1)$.
\begin{definition}[Kawase et al.~\cite{hybrid}]
    For a symmetric strictly convex function $\Phi$, 
    an allocation $\cA$ for mixed goods is said to be \emph{$\Phi$-fair} if $\cA$ is utilitarian optimal and the utility vector $z(\cA)$ minimizes $\Phi(z)$ over utility vectors of utilitarian optimal allocations\footnote{Since Kawase et al.~\cite{hybrid} deal with only binary linear valuations, $\Phi$-fairness is defined under the assumption that only utilitarian allocations are considered. We detail the definition here.}.
\end{definition}
Since $\Phi$-fairness requires an allocation to be utilitarian optimal, $\Phi$-fairness is unrelated to (relaxed) envy-freeness in general.
However, when valuations are binary and linear, the equivalence between $\Phi$-fairness, maximizing Nash welfare, and leximin is known for indivisible goods~\cite{Benabbou2021,FM2022a} and for (homogeneous) divisible goods~\cite{fujishige1980,Maruyama1978}, which is formally stated in the following theorem.
Thus, in this case, $\Phi$-fairness implies (relaxed) envy-freeness through the existing result on MNW allocations~\cite{Amanatidis2021,Segal-Halevi2019}.
An allocation $\cA$ is said to be \emph{leximin} if $\cA$ lexicographically maximizes the utility vector, i.e., maximizes the minimum utility, subject to that maximizes the second minimum, and so on.
\begin{theorem}[Benabbou et al.~\cite{Benabbou2021}, Frank and Murota~\cite{FM2022a}, Fujishige~\cite{fujishige1980}, Maruyama~\cite{Maruyama1978}]\label{thm:Phi-either type}
    Let $\Phi \colon \R^n \to \R$ be any symmetric strictly convex function.
    Suppose that agents' valuations are binary linear.
    If $M=\emptyset$ or $C=\emptyset$, the following statements are equivalent for any allocation $\cA$:
    \begin{enumerate}
        \item $\cA$ is $\Phi$-fair.
        \item $\cA$ is a leximin allocation.
        \item $\cA$ maximizes Nash welfare.
    \end{enumerate}
\end{theorem}

Let us now see the case with the mixed goods.
Even for binary linear valuations, it is known that the equivalence stated in Theorem~\ref{thm:Phi-either type} does not hold in general, even when divisible goods are identical (i.e., each agent desires either all or none of the divisible goods)~\cite{hybrid}.
However, for any instance with binary linear valuations, there exist a function $\Phi'$, specific to the instance, such that any MNW allocations are $\Phi'$-fair and vice versa.
We provide the construction of the function $\Phi'$ in Appendix~\ref{sec:MNW}.
In addition, it is shown in~\cite{hybrid} that if we set $\hat{\Phi}(z)=\sum_{i\in N}(2n)^{-z_i/\varepsilon}$ with an appropriate small number $\varepsilon$, then an allocation $\cA$ is $\hat{\Phi}$-fair if and only if $\cA$ is leximin.
Thus, we can still view $\Phi$-fairness as a generalization of both leximin and maximizing the Nash welfare.

In the present paper, we show that Theorem~\ref{thm:Phi-either type} for indivisible goods can be extended to the case when valuations are binary over $M$ and identical over $C$.
Here, the valuations are not assumed to be linear, but as long as only the utility vector is relevant, this case is essentially the one when $C$ consists of a single homogeneous good.
\begin{lemma}\label{lem:identical homogeneous}
    Suppose that the agents have an identical valuation over $C$.
    Let $\hat{C}$ be a singleton of a homogeneous divisible good such that all the agents value $\hat{C}$ as $u_1(C) = u_2(C)=\dots=u_n(C)$.
    Then for any vector $x$, there exists an allocation $\cA$ of $M\cup C$ with $z(\cA)=x$ if and only if there exists an allocation $\hat{\cA}$ of $M\cup \hat{C}$ with $z(\hat{\cA})=x$.
\end{lemma}
\begin{proof}
    Recall that the valuations over $C$ satisfies additivity and divisibility.
    For any allocation $\cC'$ of $C$, we can construct an allocation $\tilde{\cC}$ of $\hat{C}$ such that $u_i(\tilde{C}_i)=u_i(C'_i)$ for all $i\in N$.
    The argument for any allocation $\tilde{\cC}$ of $\hat{C}$ is similar.
\end{proof}
Based on Lemma~\ref{lem:identical homogeneous}, we present the following theorem, whose proof is deferred to Appendix~\ref{sec:appendix}.
\begin{theorem}\label{thm:phi-equiv mixed goods}
    Let $\Phi \colon \R^n \to \R$ be any symmetric strictly convex function.
    Suppose that agents' valuations are binary over $M$ and identical over $C$.
    The following statements are equivalent for any allocation $\cA$:
    \begin{enumerate}
        \item $\cA$ is $\Phi$-fair.
        \item $\cA$ is a leximin allocation.
        \item $\cA$ maximizes Nash welfare.
    \end{enumerate}
\end{theorem}

\section{The binary valuations}\label{sec:phi}
In this section, we investigate the relationships between relaxed envy-freeness notions and MNW allocations through $\Phi$-fairness.
Thus, we focus on utilitarian optimal allocations throughout this section.

First, assume that the agents have binary linear valuations.
We may consider only utilitarian optimal allocations such that
\begin{align}\label{eq:allocation}
    C_i \cap I_c =\emptyset \text{ or } u_i(C_i\cap I_c)>0 \quad \text{for each } i \in N \text{ and } c \in C.
\end{align}
This is not restrictive because any utilitarian optimal allocation can be transformed to one satisfying the condition~\eqref{eq:allocation} without changing the utilities of the agents and adding any envy as follows.
Suppose that the condition~\eqref{eq:allocation} is not satisfied for some agent $i$ and divisible good $c$, i.e., $C_i\cap I_c \neq \emptyset$ but $u_i(C_i\cap I_c)=0$. 
If the length of $C_i \cap I_c$ is positive, then $u_i(C_i \cap I_c)>0$ must hold because of the utilitarian optimality of the allocation.
Thus, the length of $C_i \cap I_c$ is zero, and $u_k(C_i \cap I_c)=0$ for any agent $k$.
Hence, we may transfer the piece $C_j \cap I_c$ to an agent $j$ who has a piece in $I_c$ with positive length.
The condition~\eqref{eq:allocation} is assumed just for excluding a meaningless allocation such as the one allocating a single point to some agent.

The main result of this section is to show that $\Phi$-fairness implies EFXM.
\begin{theorem}\label{thm:phi}
Let $\Phi$ be a symmetric strictly convex function.
When all agents' valuations are binary and linear, every $\Phi$-fair allocation satisfying \eqref{eq:allocation} is EFXM.
\end{theorem}

In the proof of the main result, we utilize the structure of $\Phi$-fair allocations shown in~\cite{hybrid}.
When we are given a partition $(N^1,\dots, N^q)$ of agents $N$ into $q$ subsets, we define partitions of $M$ and $C$ as follows.
Let $M^1$ be the set of indivisible goods in $M$ that are desired only by some agents in $N^1$.
For $k=2,\dots, q$, let $M^k$ be the set of goods in $M\setminus(M^1\cup\dots\cup M^{k-1})$ that are desired only by some agents in $N^1\cup \dots \cup N^k$.
We define $C^1, \dots, C^q$ in a similar way.
In other words, we define $M^k$ and $C^k$ ($k=1,\dots, q$) as
\begin{align}
    M^k &= \textstyle\left\{g\in M\setminus\bigcup_{k'=1}^{k-1}M^{k'} \mid u_i(g)=0~(\forall i\in N\setminus \bigcup_{k'=1}^k N^{k'}) \right\}, \label{eq:canonical indivisible}\\
    C^k &= \textstyle\left\{c\in C\setminus\bigcup_{k'=1}^{k-1}C^{k'} \mid u_i(c)=0~(\forall i\in N\setminus \bigcup_{k'=1}^k N^{k'})\right\}. \label{eq:canonical divisible}
\end{align}
The partitions of goods are determined only by the valuations of agents, and independent from the function $\Phi$.
\begin{lemma}[Kawase et al.~\cite{hybrid}]\label{lem:canonical}
There exists a natural number $q$, partitions $(N^1,\dots, N^q)$ of $N$ and integers $\beta_1 > \dots >\beta_q$ such that for any symmetric strictly convex function $\Phi$ and any $\Phi$-fair allocation $\cA$, 
\begin{enumerate}
    \item $\beta_k \geq u_i(A_i) \ge \beta_k-1$ for each agent $i\in N^k$ and $k=1,\dots, q$, and
    \item for $k=1,\dots,q$, only agents in $N^k$ can receive a fraction of each good in $M^k\cup C^k$, where $M^k$ and $C^k$ are defined as \eqref{eq:canonical indivisible} and \eqref{eq:canonical divisible}, respectively.
\end{enumerate}
\end{lemma}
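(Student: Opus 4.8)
The plan is to reduce the statement to the structure of the \emph{lexicographically optimal base} (the principal partition) of a submodular function built from the binary linear data, and then to read off the two conclusions from local exchange arguments that exploit the strict convexity of $\Phi$. Since each $u_i$ is binary and linear, an allocation is utilitarian optimal precisely when every indivisible good valued by somebody goes to an agent who values it, and every divisible good $c$ valued by somebody has its whole interval $I_c$ split among agents who value $c$ (goods valued by nobody may go anywhere); hence $\sum_{i\in N}u_i(A_i)$ is a fixed constant over all utilitarian-optimal allocations. For $S\subseteq N$ put
\begin{align}
f(S)=\bigl|\{g\in M:\exists i\in S,\ u_i(g)=1\}\bigr|+\mu\bigl(\{c\in C:\exists i\in S,\ u_i(c)=1\}\bigr),
\end{align}
the total measure of goods valued by at least one agent in $S$. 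Then $f$ is monotone submodular, $f(N)$ is the common total utility, every utility vector $z(\cA)$ of a utilitarian-optimal allocation lies in the base polytope $B(f)=\{z\in\R^N_{\ge 0}:z(S)\le f(S)\ \forall S,\ z(N)=f(N)\}$, and conversely $B(f)$ is exactly the convex hull of these vectors (split divisible goods freely, route indivisible goods greedily).

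Next I would invoke the principal partition of $f$: the lexicographically optimal base $b^*$ — the unique base of $f$ that is majorized by every other base, equivalently the minimizer over $B(f)$ of any fixed separable strictly convex symmetric function — comes with a chain $\emptyset=U_0\subsetneq U_1\subsetneq\cdots\subsetneq U_p=N$ of tight sets ($b^*(U_t)=f(U_t)$) on whose consecutive differences $b^*$ is constant, equal to the density $d_t=\bigl(f(U_t)-f(U_{t-1})\bigr)/|U_t\setminus U_{t-1}|$, with $d_1<\cdots<d_p$. Listing these layers from the densest downward, and merging consecutive ones whose ceilings $\lceil d_t\rceil$ agree, yields the blocks $N^1,\dots,N^q$ of the lemma and the values $\beta_k=\lceil d_t\rceil$; one checks that then $\beta_1>\cdots>\beta_q$, that $N^1\cup\cdots\cup N^k$ is a member of the chain, and that $M^k\cup C^k$ in \eqref{eq:canonical indivisible}--\eqref{eq:canonical divisible} is precisely the set of goods all of whose interested agents lie in $N^1\cup\cdots\cup N^k$.

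Then I would record two exchange inequalities valid for every $\Phi$-fair allocation $\cA$. Since $z(\cA)$ minimizes the symmetric strictly convex $\Phi$ over utilitarian-optimal allocations, no admissible transfer can strictly decrease $\Phi(z(\cA))$; and for two coordinates of fixed sum, $\Phi$ strictly decreases exactly when the pair strictly balances (and is unchanged under a bare swap). Transferring a whole indivisible good gives (a): if agent $i$ values an indivisible good $g\in A_j$ then $u_i(A_i)\ge u_j(A_j)-1$ (this keeps the allocation utilitarian optimal since $j$ also values $g$); transferring an arbitrarily small sub-piece gives (b): if agent $i$ values a divisible good $c$ with $\mu(C_j\cap I_c)>0$ then $u_i(A_i)\ge u_j(A_j)$.

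Finally I would assemble everything by downward induction on $k$. The core step is to show that every chain set $U_t$ is tight in $\cA$, i.e.\ $\sum_{i\in U_t}u_i(A_i)=f(U_t)$ (equivalently, $U_t$ receives in full every good some agent in $U_t$ values): a good valued inside $U_t$ but handed to a strictly lower layer could be moved back to an interested agent in $U_t$, whose utility is no larger than the current holder's, and this contradicts $\Phi$-fairness once the approximate levels are known — so tightness and the level bound of part~1 must be proved together. Granting tightness, each layer's total is pinned to $f(U_t)-f(U_{t-1})=d_t|U_t\setminus U_{t-1}|$; within a layer, (a) and (b) force all utilities to lie within an interval of length $1$, and combining this with the fixed layer total, the integrality of the indivisible contribution, and the fact that the divisible contribution can be redistributed continuously inside the layer, one gets $\beta_k\ge u_i(A_i)\ge\beta_k-1$ for all $i\in N^k$, which is part~1; and tightness together with (a)--(b) forces every good of $M^k\cup C^k$ to stay inside $N^k$, which is part~2. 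The main obstacle is precisely this last cluster of estimates: because the indivisible goods make the set of achievable utility vectors genuinely non-convex, a $\Phi$-fair allocation need not realize $b^*$ itself, so one cannot simply cite the polymatroid optimality of $b^*$; one must instead run the exchange/augmenting arguments by hand and carefully track the interaction of the integral and continuous parts of each layer, with extra care in the boundary case where consecutive layer levels differ by exactly one.
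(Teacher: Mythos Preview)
The paper does not itself prove Lemma~\ref{lem:canonical}: it is imported verbatim from the external reference~\cite{hybrid}, and the paper only remarks that the partition and the integers $\beta_k$ are the \emph{canonical partition} and \emph{essential values} known from the polymatroid literature~\cite{FM2022a,fujishige1980,Maruyama1978}. So there is no in-paper argument to compare your proposal against.

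As for the proposal on its own terms, your overall strategy is the standard and correct framework, and is almost certainly the one used in~\cite{hybrid}: form the monotone submodular function $f(S)$ counting the total measure of goods desired by $S$, observe that utilitarian-optimal utility vectors lie in the base polytope $B(f)$, take the principal partition (lexicographically optimal base) of $f$ to obtain the layered structure, and then use local exchange moves together with the Schur-convexity of $\Phi$ to pin every $\Phi$-fair allocation to that structure. Your exchange inequalities (a) and (b) are derived correctly, and you are right that indivisibility prevents a $\Phi$-fair allocation from realizing $b^*$ itself, so the polymatroid theory cannot be quoted as a black box.

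What you have written, however, is a plan rather than a proof, and you explicitly flag the unfinished part. Two places deserve concrete attention. First, the side claim that $B(f)$ equals the convex hull of achievable utility vectors is true but unnecessary and nontrivial; all you actually need is the containment and the fact that moves (a) and (b) preserve utilitarian optimality, which you justify correctly. Second, the step from ``layer totals are fixed and pairwise gaps are at most~$1$'' to ``each $u_i(A_i)\in[\beta_k-1,\beta_k]$ with $\beta_k=\lceil d_t\rceil$'' is not automatic: fixing the layer total already presupposes that no good of $M^k\cup C^k$ leaks to a different $N^{k'}$, which is part~2, so parts~1 and~2 are genuinely intertwined and must be established together by the downward induction you describe. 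The boundary case $\beta_k=\beta_{k+1}+1$ (where a unit transfer between adjacent blocks merely permutes two utilities and does not decrease $\Phi$) is exactly where the argument needs to be written out carefully, and your proposal stops short of doing so.
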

We remark that the partition of agents given in Lemma~\ref{lem:canonical} is called the \emph{canonical partition}, and the integers are called \emph{essential values} in the literature.
See also existing papers~\cite{FM2022a,hybrid} and references therein for the details of these concepts.

We also use the following lemma, which is used to show that MNW implies EF for divisible goods~\cite{Segal-Halevi2019}.
\begin{lemma}[Stromquist and Woodall~\cite{Stromquist1985}]\label{lem:sw}
Let $i,j$ be two agents. 
Let $H \subseteq [0, 1)$ be a piece such that $u_i(H)>0$ and $u_j(H) > 0$. 
Then, for every $\alpha \in [0,1]$, there exists a piece $H^z \subseteq H$ such that $u_i(H^z)=\alpha \cdot u_i(H)$ and $u_j(H^z)=\alpha \cdot u_j(H)$.
\end{lemma}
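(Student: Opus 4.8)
The plan is to reduce the statement to Lyapunov's convexity theorem for non-atomic vector measures. By assumptions (iii) and (iv), the restriction of each $u_i$ to the Borel subsets of $C=[0,1)$ is a finite, nonnegative, countably additive measure; so, fixing the two agents $i$ and $j$, I would consider the $\R^2$-valued set function $\nu(S)=(u_i(S),u_j(S))$ defined on the Borel subsets $S\subseteq H$. This is a finite vector measure, and the whole argument rests on it being non-atomic.

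The next step is to verify that $u_i$ and $u_j$, hence $\nu$, have no atoms. This follows from assumption (ii) --- every point of $C$ has value $0$ --- together with countable additivity: a hypothetical atom, on splitting $[0,1)$ into $2^m$ dyadic subintervals and using finite additivity, must be concentrated up to a null set inside a nested sequence of such subintervals of vanishing length, hence inside a single point, contradicting that points have value zero. With non-atomicity established, Lyapunov's convexity theorem gives that the range $\mathcal{R}=\{\nu(S):S\subseteq H \text{ Borel}\}$ is a convex (and compact) subset of $\R^2$. Since $\nu(\emptyset)=(0,0)\in\mathcal{R}$ and $\nu(H)=(u_i(H),u_j(H))\in\mathcal{R}$, convexity yields $\alpha\cdot(u_i(H),u_j(H))\in\mathcal{R}$ for every $\alpha\in[0,1]$; any Borel set $H^z\subseteq H$ realizing this value satisfies $u_i(H^z)=\alpha\,u_i(H)$ and $u_j(H^z)=\alpha\,u_j(H)$, as required.

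The only genuinely non-routine point, and thus the main obstacle, is justifying atomlessness from the primitive hypotheses of the model rather than assuming it outright; the dyadic-interval argument above handles this, after which the conclusion is immediate. I also note that this is a classical fact (a special case of the result of Stromquist and Woodall), and that one could instead give a self-contained intermediate-value proof --- normalizing $u_i(H)=u_j(H)=1$, passing to densities with respect to $u_i+u_j$, and sweeping a threshold on their difference while keeping the $u_i$-mass equal to $\alpha$ --- but invoking Lyapunov's theorem is the most economical route.
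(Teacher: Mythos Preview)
The paper does not include its own proof of this lemma; it is stated with attribution to Stromquist and Woodall~\cite{Stromquist1985} and then used as a black box. Your argument via Lyapunov's convexity theorem is correct and is the standard route to the statement as formulated here. The dyadic-interval reduction of a hypothetical atom to a single point is sound: at each dyadic level exactly one interval carries the full mass of the atom (by the atom property and finite additivity), the chosen intervals are nested since each refines the previous, and countable additivity forces the mass onto their singleton intersection, contradicting assumption~(ii).

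One remark worth making is that the original Stromquist--Woodall result is strictly stronger than what you prove: it guarantees that $H^z$ can be chosen as a union of at most $n$ subintervals when there are $n$ measures (so at most two intervals here), via a moving-knife\,/\,intermediate-value argument closer to the alternative you sketch at the end. The paper only needs the existential, measurable-set version, so your Lyapunov proof fully suffices for its purposes. The hypotheses $u_i(H)>0$ and $u_j(H)>0$ play no role in the Lyapunov argument and could be dropped; they are presumably retained from the form in which the lemma is invoked downstream.
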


Now we are ready to proof Theorem~\ref{thm:phi}.
For each $i \in [n]$, we denote by $\chi_{i}$ a unit vector in $\{0,1\}^n$ whose $i$th entry is $1$, and all other entries are $0$.
\begin{proof}[Proof of Theorem~\ref{thm:phi}]
    Let $\cA$ be a $\Phi$-fair allocation.
    First, since $\cA$ is utilitarian optimal over allocations,
    every agent $i$ has a positive valuation for any indivisible good in $M_i$.
    For each agent $i$, let $C^+_i = \{ C_i \cap I_c \mid u_i(C_i \cap I_c)>0, \ c \in C\}$.
    We observe that, for each $i,j\in N$, we have $u_i(g)\leq u_j(g)~(=1)$ for any $g\in M_j$ and $u_i(c') \leq u_j(c')$ for any $c' \in C^+_j$.
    Hence we have
    \begin{align}
        u_i(A_j) \leq u_j(A_j)\quad (i,j\in N). \label{eq:compare}
    \end{align}

    Let $(N^1,\dots,N^q)$ be the partition of agents indicated in Lemma~\ref{lem:canonical}, and let $\beta_1>\dots>\beta_q$ be the corresponding integers. 
    Let $(M^1,\dots,M^q)$ and $(C^1,\dots,C^q)$ be the partitions of $M$ and $C$ defined as \eqref{eq:canonical indivisible} and \eqref{eq:canonical divisible}, respectively.
    Focus on an arbitrary agent $i$ in $N^k$ ($1 \leq k \leq q$).
    By definition, agent $i$ does not desire goods in $M^{k'} \cup C^{k'}$ with $k' < k$.
    Hence, $u_i(A_{i'}) = 0$ holds for each $i'\in N^{k'}$ with $k'<k$ by the property~2 in Lemma~\ref{lem:canonical}.
    Additionally, by~\eqref{eq:compare} and the property~1 in Lemma~\ref{lem:canonical}, for each $i'\in N^{k'}$ with $k'>k$, it holds that $u_i(A_{i'}) \leq u_{i'}(A_{i'}) \leq \beta_{k'} \leq \beta_k-1 \leq u_i(A_i)$.
    Therefore, agent $i$ does not envy agents not in $N^{k}$.

    Suppose that agent $i \in N^k$ envies agent $j \in N^k$, i.e., $u_i(A_i)<u_i(A_j)$.
    We observe that $z(\cA)_i=u_i(A_i) < u_i(A_j) \leq u_j(A_j)=z(\cA)_j$ by \eqref{eq:compare}.

    \paragraph{Case 1} Assume that $u_i(C_j)=0$.
    Then it holds that 
    $$\beta_k-1 \leq u_i(A_i) < u_i(M_j) \leq u_j(M_j) \leq \beta_k,$$
    and hence $u_j(M_j)=\beta_k$ by integrality.
    By Lemma~\ref{lem:canonical} again, we have $u_j(C_j) =0$, which implies that $C_j = \emptyset$ by \eqref{eq:allocation}.
    Then, $M_j$ is nonempty, and suppose that there exists an indivisible good $g\in M_j$ such that $u_i(A_i)<u_i(A_j \setminus \{g\})$.
    \begin{itemize}
        \item If $u_i(g)=0$, then by Lemma~\ref{lem:canonical} and the utilitarian optimality of $\cA$, we have
    \begin{align*}
        \beta_k \geq u_j(A_j)= u_j(M_j) = 1+ u_j(M_j \setminus \{g\}) > 1+u_i(A_i).
    \end{align*}
    However, this implies that $u_i(A_i)<\beta_k-1$ and this is a contradiction.

        \item Assume that $u_i(g)=1$.
    It holds that 
    $$u_j(A_j)-u_i(A_i)\geq u_i(A_j)-u_i(A_i) = 1+ u_i(A_j \setminus \{g\})-u_i(A_i) > 1.$$
    However, since both agents $i,j$ belong to $N^k$, it must hold that $ u_j(A_j)-u_i(A_i) \leq 1$ by Lemma~\ref{lem:canonical}, and this is a contradiction.
    \end{itemize}

    \paragraph{Case 2} Assume that $u_i(C_j)>0$.
    Choose a piece $c'\in C^+_j$ such that $u_i(c')>0$.
    Note that $u_i(c')=u_j(c')>0$ since valuations are binary and linear.
    Let $\varepsilon$ be a number such that $0<\varepsilon< \min\{u_j(A_j)-u_i(A_i), u_j(c')\}$.
    Note that $u_j(A_j)-u_i(A_i) \geq u_i(A_j)-u_i(A_i)>0$.
    By Lemma~\ref{lem:sw} with $H=c'$ and $\alpha=\varepsilon/u_j(c')$, we can choose a piece $\hat{c} \subseteq c'$ such that $u_i(\hat{c})=u_j(\hat{c})=\varepsilon$.
    Let $\cA'$ be an allocation obtained from $\cA$ by transferring the piece $c'$ to agent $i$.
    The utility vector of $\cA'$ is $z(\cA') = z(\cA) - \varepsilon (\chi_j-\chi_i)$.
    The allocation $\cA'$ is also utilitarian optimal.
    Let $\lambda$ be the number such that $\lambda(u_j(A_j)-u_i(A_i))=\varepsilon$.
    Then, since $\lambda \in (0,1)$, we can see that
    \begin{align}\begin{split}
        \Phi(z(\cA))
        &=\lambda\Phi(z(\cA)-(z(\cA)_j-z(\cA)_i)(\chi_j-\chi_i))+(1-\lambda)\Phi(z(\cA))\\
        &>\Phi(\lambda(z(\cA)-(z(\cA)_j-z(\cA)_i)(\chi_j-\chi_i))+(1-\lambda)z(\cA))\\
        &=\Phi(z(\cA)-\varepsilon(\chi_j-\chi_i)),
    \end{split}
    \label{eq:decrease Phi}
    \end{align}
    which contradicts to $\Phi$-fairness of $\cA$.
    
    Summarizing the arguments so far, the only possible case is when $C_j=\emptyset$ and we can eliminate agent $i$'s envy towards agent $j$ by removing any indivisible good from $M_j$.
    Therefore, $\cA$ is EFXM.
\end{proof}
We remark that even without the condition~\eqref{eq:allocation}, we can say that any $\Phi$-fair allocation satisfies a slightly weaker variant of the EFXM: for any $i,j \in N$, agent $i$ uses the EFX criterion if $u_i(C_j)=0$, and the EF criterion otherwise.

It is clear that any utilitarian optimal allocation is PO.
The above theorem implies that an EFXM and PO allocation exists if a $\Phi$-fair allocation always exists.
We observe the existence of a $\Phi$-fair allocation by using the existence result of an MNW allocation for divisible goods.
\begin{theorem}[Segal-Halevi and Sziklai~\cite{Segal-Halevi2019}]\label{thm:exist MNW divisible}
    When all the goods are divisible and the agents have additive valuations,
    there exists an MNW allocation of divisible goods.
\end{theorem}
\begin{lemma}\label{lem:exist Phi-fair}
    Suppose that all agents' valuations are binary and linear.
    For any symmetric strictly convex function $\Phi$, there exists a $\Phi$-fair allocation.
\end{lemma}
\begin{proof}
    For each utilitarian optimal allocation of indivisible goods $\cM$, we say that an allocation $\cA$ of $E$ is \emph{$\Phi$-fair with respect to $\cM$} if the utility vector $z(\cA)$ minimizes $\Phi$ among utilitarian optimal allocations of $E$ such that indivisible goods are allocated according to $\cM$.
    If there exists a $\Phi$-fair allocation with respect to $\cM$ for all utilitarian optimal allocations $\cM$ of $M$, then there also exists a $\Phi$-fair allocation.
    
    Let $\cM$ be an arbitrary utilitarian optimal allocation of indivisible goods.
    Since we do not need to consider agents who value $C$ as $0$ here, we may assume that all agents value $C$ positively.
    We construct the following relaxed instance to determine an allocation of divisible goods.
    The set of goods is $E$ and we assume that all the goods are divisible and homogeneous.
    A profile $(v_1,\dots, v_n)$ of binary linear valuations is defined as follows.
    For each good $g \in M$, we set $v_i(g)=1$ for the unique agent $i$ with $g\in M_i$, and $v_j(g)=0$ for the other agents $j$.
    For each good $c \in C$, we set $v_i(c)=u_i(c)$ for each $i\in N$. 
    We may consider only allocations satisfying~\eqref{eq:allocation} because the valuations are linear.
    Hence, in every utilitarian optimal allocation for this relaxed instance, every good $g$ is fully allocated to the agent $i$ with $g \in M_i$.

    Because valuations are binary linear in the above relaxed instance, $\Phi$-fairness is equivalent to maximizing Nash welfare by Theorem~\ref{thm:Phi-either type}.
    This together with Theorem~\ref{thm:exist MNW divisible} implies that there exists a $\Phi$-fair allocation $\cA^*$ for the relaxed instance.
    By construction, the set of utilitarian optimal allocations in the relaxed instance coincides with that in the original instance.
    Therefore, $\cA^*$ is also $\Phi$-fair with respect to $\cM$ in the original instance.
\end{proof}

\begin{corollary}\label{cor:EFXM+PO}
    When all agents' valuations are binary and linear, there always exists an EFXM and PO allocation for mixed goods.
\end{corollary}

In the remainder of this section, we assume that valuations over $C$ are identical but not necessarily binary.
We show that the same results as Theorem~\ref{thm:phi} and Corollary~\ref{cor:EFXM+PO} hold in this case.
We remark that the condition~\eqref{eq:allocation} is not restrictive also in this case because if $u_i(C_i \cap I_c)=0$ for some $i\in N$ and $c\in C$, then all agents value $C_i \cap I_c$ as $0$.
\begin{theorem}\label{thm:binary money}
    When all the agents' valuation are binary over $M$ and identical over $C$, every $\Phi$-fair allocation satisfying~\eqref{eq:allocation} is EFXM.
\end{theorem}
\begin{proof}
    Let $\cA$ be any $\Phi$-fair allocation.
    By the utilitarian optimality of $\cA$,
    each agent $i$ values every indivisible good $g \in M_i$ as $1$.
    We also have $u_i(C_j) > 0$ for each $i,j\in N$ with $C_j \neq \emptyset$ by~\eqref{eq:allocation}.
    Hence, we observe that $u_i(A_j) \leq u_j(A_j)$ for any $i,j\in N$.
    In the following, suppose that agent $i$ envies agent $j$, i.e., $u_i(A_i)<u_i(A_j)$.

    \paragraph{Case 1} Assume that $u_i(C_j) =0$, which means $C_j =\emptyset$ by~\eqref{eq:allocation}.
    To prove that the EFXM criterion holds, we assume the contrary that there exists some good $g\in M_j$ such that $u_i(A_j \setminus \{g\}) > u_i(A_i)$.
    This implies that $u_j(A_j) - u_i(A_i) = u_j(A_j\setminus \{g\})+1 - u_i(A_i) \geq u_i(A_j \setminus \{g\}) + 1 -u_i(A_i)> 1$.
    Since $u_i(A_i)<u_i(A_j) = u_i(M_j)$, there must be an indivisible good $g^* \in M_j$ with $u_i(g^*)=1$.
    Let $\cA'$ be an allocation obtained from $\cA$ by transferring such a good $g^*$ to agent $i$, and let $\lambda = 1/(u_j(A_j) - u_i(A_i)) \in (0,1)$.
    Here $z(\cA') = z(\cA) - (\chi_j - \chi_i) = \lambda(z(\cA) - (z(\cA)_j-z(\cA)_i)(\chi_j-\chi_i))+(1-\lambda)z(\cA)$.
    Then $\cA'$ is utilitarian optimal but we see that $\Phi(z(\cA)) > \Phi(z(\cA'))$ in a similar way to~\eqref{eq:decrease Phi} in the proof of Theorem~\ref{thm:phi}.
    This contradicts $\Phi$-fairness of $\cA$.

    \paragraph{Case 2} Assume that $u_i(C_j)>0$.
    Let $\varepsilon$ be a number such that $0<\varepsilon< \min\{u_j(A_j)-u_i(A_i), u_j(C_j)\}$.
    Note that $u_j(A_j)-u_i(A_i) \geq u_i(A_j)-u_i(A_i)>0$.
    By divisibility of valuation functions, we can choose a piece $\hat{C} \subseteq C_j$ such that $u_i(\hat{C})=u_j(\hat{C})=\varepsilon$.
    Let $\cA'$ be a utilitarian optimal allocation obtained from $\cA$ by transferring the piece $\hat{C}$ to agent $i$, and let $\lambda = \varepsilon/(u_j(A_j)-u_i(A_i))$.
    Then, we see that $\Phi(z(\cA)) > \Phi(z(\cA'))$ in the same way as~\eqref{eq:decrease Phi}, and this contradicts to the choice of $\cA$.

    Therefore, the only possible case is when $C_j=\emptyset$ and agent $i$'s envy towards agent $j$ is eliminated by removing any indivisible good from $M_j$, and hence $\cA$ is EFXM.
\end{proof}

In a similar way to Corollary~\ref{cor:EFXM+PO}, we can see that there exists an EFXM and PO allocation under the assumption of Theorem~\ref{thm:binary money} if a $\Phi$-fair allocation exists.
By Theorem~\ref{thm:phi-equiv mixed goods}, it suffices to see that an MNW allocation exists. 
Since the Nash welfare is determined by the utility vector, Lemma~\ref{lem:identical homogeneous} implies that we can treat the divisible goods as a single homogeneous one without loss of generality.
Li et al.~\cite{Li2023} show that an MNW allocation always exists when $C$ consists of a single homogeneous divisible good.
Thus, we observe the following corollary.
\begin{corollary}
    When all agents' valuations are binary over $M$ and identical over $C$, there always exists an EFXM and PO allocation for mixed goods.
\end{corollary}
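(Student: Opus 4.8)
The plan is to derive the corollary from Theorem~\ref{thm:phi}: since that theorem already guarantees that any $\Phi$-fair allocation satisfying \eqref{eq:allocation} is EFXM and PO, it suffices to show that, for at least one symmetric strictly convex $\Phi$ (indeed for every such $\Phi$, and in particular for one representing the MNW objective in the given instance), a $\Phi$-fair allocation satisfying \eqref{eq:allocation} exists.

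First I would reduce this existence question to finitely many purely divisible instances. Let $\mathcal{P}$ be the finite set of ordered partitions $\cM=(M_1,\dots,M_n)$ of $M$ maximizing $\sum_{i\in N} u_i(M_i)$. Fix $\cM\in\mathcal{P}$ and pass to the \emph{frozen} instance in which, for each $g\in M_i$, the value $u_i(g)$ is kept for agent $i$ and set to $0$ for every other agent, and in which every originally indivisible good $g$ is now regarded as a divisible good carrying the corresponding linear valuation. This is an instance of homogeneous divisible fair division with linear valuations, so by the classical results of Fujishige~\cite{fujishige1980} and Maruyama~\cite{Maruyama1978} it admits a $\Phi$-fair allocation $\cA^{\cM}$ (after discarding, if necessary, the agents with no positive valuation in this instance), and by the linearity-based normalization discussed after \eqref{eq:allocation} we may take $\cA^{\cM}$ to satisfy \eqref{eq:allocation}. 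Since $\cA^{\cM}$ is utilitarian optimal hence PO, every frozen good $g$ is allocated entirely to its unique positive-valuation owner, so the restriction of $\cA^{\cM}$ to $M$ equals $\cM$ and $\cA^{\cM}$ is a genuine mixed-goods allocation extending $\cM$.

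Next I would show that an allocation $\cA^{\ast}$ minimizing $\Phi(z(\cdot))$ over the finite family $\{\cA^{\cM}:\cM\in\mathcal{P}\}$ is $\Phi$-fair for the original mixed-goods instance. Given any utilitarian optimal mixed-goods allocation $\cA'$, the additivity of the valuations and the independence of the indivisible and divisible contributions to $\sum_{i\in N}u_i(A'_i)$ force the indivisible part $\cM'$ of $\cA'$ to lie in $\mathcal{P}$ and make $\cA'$ a utilitarian optimal allocation of the frozen instance of $\cM'$. Hence $\Phi(z(\cA^{\ast}))\le\Phi(z(\cA^{\cM'}))\le\Phi(z(\cA'))$, the first inequality by the choice of $\cA^{\ast}$ and the second by $\Phi$-fairness of $\cA^{\cM'}$ in its frozen instance; and $\cA^{\ast}$ is itself utilitarian optimal for the original instance because the utilitarian optimum of every frozen instance equals the original one. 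Thus $\cA^{\ast}$ is $\Phi$-fair and satisfies \eqref{eq:allocation}, and Theorem~\ref{thm:phi} gives that $\cA^{\ast}$ is EFXM and PO; choosing $\Phi$ to represent the MNW objective shows in addition that an MNW allocation can be taken with these properties. Finally, one should recall, following Bei et al.~\cite{Bei2021}, that EFM and PO are incompatible for general additive valuations, so the corollary genuinely reflects the special structure of binary linear valuations.

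The step I expect to be the main obstacle is the gluing argument above: one must verify carefully that the indivisible part of an arbitrary utilitarian optimal mixed-goods allocation is automatically utilitarian optimal for the indivisible subproblem and that freezing leaves the utilitarian optimum unchanged, so that the $\Phi$-minimizer over the finitely many candidates $\cA^{\cM}$ is a genuine global minimizer of $\Phi$ over all utilitarian optimal mixed-goods allocations, not merely the best among extensions of a single $\cM$. By contrast, the measure-theoretic normalization to \eqref{eq:allocation} and the appeal to the classical divisible-goods existence theorems are routine once the linearity of the valuations is exploited.
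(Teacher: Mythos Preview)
Your proposal is correct and follows essentially the same route as the paper: freeze each utilitarian-optimal indivisible allocation $\cM$, invoke the divisible $\Phi$-fair existence results of Fujishige and Maruyama on the resulting linear instance, normalize to satisfy \eqref{eq:allocation}, use PO to recover $\cM$, enumerate over the finitely many $\cM$, and apply Theorem~\ref{thm:phi}. The only difference is that you spell out the gluing step---why the $\Phi$-minimizer over the finite family $\{\cA^{\cM}\}$ is a global $\Phi$-fair allocation---more carefully than the paper, which simply asserts it.
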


Note that the even EFM and PO are incompatible for general additive valuations, as pointed out by Bei et al.~\cite{Bei2021}.

\section{The general additive valuations}\label{sec:main}
In this section, we explore connections between MNW allocations and relaxed envy-freeness when agents have more general valuations than binary ones.
It is well known that leximin and EF1 are incompatible in general when only indivisible goods exist, and hence leximin and EF1M are also incompatible for mixed goods.
On the other hand, any MNW allocation for indivisible goods still satisfies relaxed envy-freeness: EFX for bi-valued valuations~\cite{Amanatidis2021}, and EF1 for general additive valuations~\cite{Caragiannis2019}.
Thus, we discuss the extensibility of these relationships.

\subsection{Connection to EFXM}
First, we focus on bi-valued valuations.
We say that valuation functions $u_1,\dots, u_n$ are \emph{bi-valued} if there are two distinct nonnegative real numbers $\alpha, \beta$ such that $u_i(e)\in \{\alpha,\beta\}$ for each agent $i\in N$ and good $e\in E$.
The bi-valued valuations in which one value is $0$ are equivalent to binary ones.

In fact, there exists an EFXM allocation for bi-valued additive valuations.
Bei et al.~\cite{Bei2021} implicitly showed that, if the profile of the agents' valuations over indivisble goods belongs to the class such that an EFX allocation exists, then an EFXM allocation always exists.
However, we provide two examples to show that any MNW allocation is not necessarily weak EFM.

\begin{example}[bi-valued, linear, identical over $C$]\label{ex:bi-valued,identical on C}
    Consider an instance with $N=\{1,2\}$, $M=\{a,b\}$, $C=\{c\}$.
    The valuations are $v_1(a)=v_1(b)=1+\varepsilon$, $v_2(a)=v_2(b)=1$, and $v_1(c)=v_2(c)=1$.

    When agent $1$ receives both indivisible goods, the Nash welfare is at most  $2(1+\varepsilon)\cdot 1 = 2(1+\varepsilon)$, which is attained by giving $c$ to agent $2$.
    When agent $2$ receives both indivisible goods, the Nash welfare is at most $1\cdot 2 = 2$, which is attained by giving $c$ to agent $1$.
    When each agent receives one indivisible good, the Nash welfare is maximized by allocating $\frac{1-\varepsilon}{2}$ of $c$ to agent $1$ and the remainder to agent $2$.
    This allocation attains the maximum Nash welfare, which is $((1+\varepsilon)+\frac{1-\varepsilon}{2}) (1+\frac{1+\varepsilon}{2}) = \frac{(3+\varepsilon)^2}{4}$.
    In any MNW allocation $\cA$, since $u_1(A_2) = 1+\varepsilon +\frac{1+\varepsilon}{2} > 1+\varepsilon + \frac{1-\varepsilon}{2} = u_1(A_1)$, agent $1$ envies agent $2$ while $u_1(C_2)>0$.
    This indicates that any MNW allocation is not even weak EFM.
\end{example}
\begin{example}[bi-valued, linear, identical over $M$]\label{ex:bi-valued,identical on M}
    Consider an instance with $N=\{1,2\}$, $M=\{a,b\}$, $C=\{c\}$.
    The valuations are $v_i(g)=1$ for each $i\in N$ and $g\in M$, and $v_1(c)=1+\varepsilon$ and $v_2(c)=1$.

    In a similar way to the previous example, we observe that the Nash welfare is maximized by allocating one indivisible good to each agent, and allocating $\frac{1+2\varepsilon}{2(1+\varepsilon)}$ of $c$ to agent $1$, and the remainder to agent $2$.
    In such an allocation $\cA$, since $u_2(A_2) = 1+\frac{1}{2(1+\varepsilon)} < 1+\frac{1+2\varepsilon}{2(1+\varepsilon)} = u_2(A_1)$, agent $2$ envies agent $1$ while $u_2(C_1)>0$.
    This indicates that any MNW allocation is not even weak EFM in this instance.
\end{example}

Therefore, it remains open whether an EFXM and PO allocation always exists for bi-valued valuations.

Examples~\ref{ex:bi-valued,identical on C} and~\ref{ex:bi-valued,identical on M} suggest that MNW allocations may fail to satisfy EFXM when valuations are identical over at most one of $M$ and $C$.
To complement this fact, we focus on identical valuations over both $M$ and $C$.
\begin{theorem}\label{thm:identical}
    When all the agents' valuation are identical and additive, every MNW allocation satisfying~\eqref{eq:allocation} for mixed goods is EFXM.
\end{theorem}
\begin{proof}
    Let $\cA$ be an MNW allocation. 
    For simplicity of notation, we denote by $u$ the valuation function of agents.
    Suppose that agent $i$ envies agent $j$, i.e., $u(A_i)<u(A_j)$.

    Assume that $C_j \neq \emptyset$, which means that $u(C_j) > 0$ by the condition~\eqref{eq:allocation}.
    Let $\alpha\in (0,1)$ be any number such that $\alpha\cdot u(C_j) < u(A_j)-u(A_i)$.
    By divisibility of $u$, we choose a piece $C' \subseteq C_j$ such that $u(C')= \alpha \cdot u(C_j)$.
    Let $\cA'$ be an allocation obtained from $\cA$ by transferring $C'$ to agent $i$.
    In $\cA'$, the number of agents with positive utilities is increased, or the number of such agents remains the same as $\cA$ but the Nash welfare of them is larger than $\cA$ because  
    \begin{align}
        &(u(A_i)+u(C'))(u(A_j)-u(C'))-u(A_i)u(A_j) \\
        &=u(C') \cdot \left(u(A_j)-(u(A_i)+u(C')) \right) 
        > 0.
    \end{align}
    This contradicts that $\cA$ is an MNW allocation.
    Therefore, $C_j=\emptyset$ must hold, which implies that $M_j \neq \emptyset$.
    
    Supposing that the EFXM condition is not satisfied, we choose an indivisible good $g^*\in M_j$ such that 
    $u(A_i) < u(A_j \setminus \{g^*\})$.
    Let $\cA'$ be an allocation obtained from $\cA$ by transferring $g^*$ to agent $i$.
    In a similar way as above, $\cA'$ improves the Nash welfare, and thus $\cA$ cannot be an MNW allocation.
    Therefore, the EFXM criterion must be satisfied.
\end{proof}

\subsection{Connection to EF1M}
Next, we consider general additive valuations.
Caragiannis et al.~\cite{Caragiannis2019} mentioned that 
MNW allocations are EF1M, which extends the relationship for indivisible goods.
We also provide a simple proof that naturally combines the proofs of existing results.
\begin{theorem}[Caragiannis et al.~\cite{Caragiannis2019}]\label{thm:EF1M}
When all agents have additive valuations, every MNW allocation for mixed goods is EF1M.
\end{theorem}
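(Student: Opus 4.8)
The plan is to run the classical ``transfer argument'' for maximum Nash welfare, using Lemma~\ref{lem:sw} to carry out reallocations of divisible goods in a value-controlled way. Let $\cA$ be an MNW allocation and let $P=\{i\in N:u_i(A_i)>0\}$, so $|P|$ is maximum among all allocations and, subject to that, $\prod_{i\in P}u_i(A_i)$ is maximum. Pareto optimality is immediate: if $\cA'$ Pareto-dominated $\cA$ then its positive-utility set contains $P$; if it strictly contains $P$ we contradict maximality of $|P|$, and if it equals $P$ then $\prod_{i\in P}u_i(A'_i)>\prod_{i\in P}u_i(A_i)$, contradicting Nash-welfare maximality. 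For EF1M, fix agents $i,j$; if $u_i(A_i)\ge u_i(A_j)$ there is nothing to check, so assume $u_i(A_i)<u_i(A_j)$ (``$i$ envies $j$'') and I would prove that then $M_j\neq\emptyset$ and $u_i(A_j\setminus\{g\})\le u_i(A_i)$ for some indivisible $g\in M_j$. In each case below, assuming the contrary lets me build an allocation $\cA'$ by transferring a good or a piece of $C_j$ to agent $i$ that either strictly increases $|P|$ or keeps $|P|$ fixed while strictly increasing $\prod_{i\in P}u_i(A_i)$, contradicting the MNW property; one only has to check each time that the transfer does not push an agent from positive to zero utility.

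I would first dispose of the two boundary cases, where the reallocation changes no utility other than those of $i$ and $j$. If $i\notin P$ (so $u_i(A_i)=0$ and $u_i(A_j)>0$): any sub-piece $H\subseteq C_j$ with $u_i(H)>0$ can be handed to $i$ --- a small Lemma~\ref{lem:sw}-fraction of it if $u_j(H)>0$, all of it if $u_j(H)=0$ --- making $i$ positive without dropping $j$, which increases $|P|$; hence $u_i(C_j)=0$ and so $u_i(M_j)=u_i(A_j)>0$. Any $g\in M_j$ with $u_i(g)>0$ must then satisfy $j\in P$ and $u_j(A_j\setminus\{g\})=0$, since otherwise transferring $g$ makes $i$ positive while keeping $j$ positive. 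Two such ``essential'' goods of $j$ would force $u_j(M_j)\ge 2u_j(A_j)>u_j(A_j)\ge u_j(M_j)$, so there is a unique $g^*\in M_j$ with $u_i(g^*)>0$, whence $u_i(A_j)=u_i(g^*)$ and $u_i(A_j\setminus\{g^*\})=0=u_i(A_i)$, as required. If instead $i\in P$ but $j\notin P$, then $u_j(A_j)=0$, so every good and every piece of $A_j$ has zero $j$-value; handing one that $i$ values to $i$ raises $i$'s (positive) utility with no other change, contradicting Nash-welfare maximality --- so this configuration does not arise under the envy assumption.

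The heart of the proof is the case $i,j\in P$; write $x=u_i(A_i)>0$ and $y=u_j(A_j)>0$. I would first establish the auxiliary inequality $u_i(C_j)\le x$: otherwise a small Lemma~\ref{lem:sw}-transfer of $C_j$ to $i$ (or, if $u_j(C_j)=0$, all of $C_j$) strictly increases $u_i(A_i)u_j(A_j)$, because the derivative of the product in the transferred fraction at $0$ equals $u_i(C_j)y-xu_j(C_j)\ge u_i(C_j)u_j(C_j)-xu_j(C_j)>0$ (using $y\ge u_j(C_j)$ and $u_i(C_j)>x$). In particular, if $M_j=\emptyset$ then $u_i(A_j)=u_i(C_j)\le x$, contradicting the envy, so that branch of EF1M cannot be violated. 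When $M_j\neq\emptyset$, suppose for contradiction that $u_i(A_j\setminus\{g\})>x$ for every $g\in M_j$. Since $u_i(C_j)\le x<u_i(A_j)$ we have $u_i(M_j)>0$, so I may set $r=\min\{u_j(g)/u_i(g):g\in M_j,\,u_i(g)>0\}$, attained at some $g^*$ (if $r=0$, transferring $g^*$ leaves $u_j$ unchanged and raises $u_i$, a contradiction, so take $r>0$). Transferring $g^*$ to $i$ changes $u_i(A_i)u_j(A_j)$ by $u_i(g^*)\bigl(y-r(x+u_i(g^*))\bigr)$, so it suffices to show $y>r(x+u_i(g^*))$. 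The same Lemma~\ref{lem:sw}-perturbation applied to an arbitrary $H\subseteq C_j$ gives $u_i(H)\le(x/y)u_j(H)$, hence $u_j(C_j)\ge(y/x)u_i(C_j)$; combined with $u_j(M_j)\ge r\,u_i(M_j)$, if $r\le y/x$ we get $y=u_j(M_j)+u_j(C_j)\ge r\,u_i(A_j)>r(x+u_i(g^*))$, the last step by the contradiction hypothesis; and if $r>y/x$ the same bounds force $u_i(A_j)<(x/y)u_j(A_j)=x$, again contradicting the envy. Either way we reach a contradiction, so some $g\in M_j$ satisfies $u_i(A_j\setminus\{g\})\le u_i(A_i)$, completing EF1M.

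I expect the main obstacle to be exactly the entanglement of divisible and indivisible goods inside $A_j$: EF1M allows deleting only a single \emph{indivisible} good, yet $j$ may hold divisible pieces that $i$ values, so one must (i) rule out that $i$ already envies $j$'s divisible part alone, i.e.\ prove $u_i(C_j)\le u_i(A_i)$, and (ii) choose which indivisible good to delete by a minimum-value-ratio rule, exploiting the fact --- itself a consequence of MNW --- that transfers of divisible goods are never profit-increasing, so that the awkward regime $r>y/x$ is automatically excluded by the envy hypothesis. A secondary, bookkeeping-level subtlety is that the primary MNW objective maximizes the \emph{number} of agents with positive utility, so every candidate reallocation must be checked not to decrease that count; this is why the $i\notin P$ and $j\notin P$ cases are argued combinatorially (via $|P|$) rather than through a Nash-welfare comparison.
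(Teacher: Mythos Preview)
Your proof is correct and follows essentially the same transfer-argument approach as the paper: both establish EF1M by showing that any violation would permit a profitable reallocation---either of a small divisible piece via Lemma~\ref{lem:sw}, or of a minimum-ratio indivisible good $g^*\in\argmin_{g\in M_j:\,u_i(g)>0}u_j(g)/u_i(g)$---contradicting the MNW property. Your presentation is more careful (you explicitly dispose of the $i\notin P$ and $j\notin P$ boundary cases and isolate the auxiliary inequality $u_i(C_j)\le u_i(A_i)$ before tackling the indivisible part), while the paper's proof folds the divisible and indivisible transfers into a single case split on where the minimum value-ratio is attained; but the underlying mechanism is identical.
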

\begin{proof}
Let $\cA$ be an MNW allocation.
We observe that $\cA$ is PO because if some allocation Pareto dominates $\cA$, then we can improve Nash welfare.
Then if agent $i$ has a positive value for some subset $A'$ of agent $j$'s bundle $A_j$, i.e., $u_i(A')>0$, then $u_j(A')>0$ holds.
Suppose that agent $i\in N$ envies agent $j$.

Assume that $u_i(C_j)>0$, and either $M_j=\emptyset$ or $\frac{u_j(g)}{u_i(g)} > \frac{u_i(C_j)}{u_i(C_j)}~(\forall g\in M_j)$.
Note that this case includes when $M_j \neq \emptyset$ but $u_i(M_j)=0$.
In this case, $u_j(C_j)>0$ also holds.
Choose a number $\alpha\in (0,1)$ such that $\alpha\cdot u_i(C_j) < u_i(A_j)-u_i(A_i)$.
By Lemma~\ref{lem:sw} with $H=C_j$, there exists 
a piece $C' \subseteq C_j$ such that 
$u_i(C')=\alpha\cdot u_i(C_j)$ and $u_j(C')=\alpha\cdot u_j(C_j)$.
Thus, we have $u_i(A_i) + u_i(C') < u_i(A_j)$ and 
$\frac{u_j(C')}{u_i(C')} = \frac{u_j(C_j)}{u_i(C_j)} \leq \frac{u_j(M_j)+u_j(C_j)}{u_i(M_j)+u_i(C_j)} = \frac{u_j(A_j)}{u_i(A_j)}$.
Let $\cA'$ be an allocation obtained from $\cA$ by transferring $C'$ to agent $i$.
In $\cA'$, more agents have positive utilities than $\cA$, or the number of such agents remains the same but the Nash welfare of them is larger than $\cA$ because  
\begin{align}
    &(u_i(A_i)+u_i(C'))(u_j(A_j)-u_j(C'))-u_i(A_i)u_j(A_j) \\
    &=u_i(C') \cdot \left(u_j(A_j)-(u_i(A_i)+u_i(C'))\frac{u_j(C')}{u_i(C')}  \right) \\
    &>u_i(C')\cdot \left(u_j(A_j)-u_i(A_j)\frac{u_j(A_j)}{u_i(A_j)}  \right)
    = 0.
\end{align}
This contradicts that $\cA$ is an MNW allocation.

By the above discussion, it holds that $u_i(C_j)=0$ or $\frac{u_j(g)}{u_i(g)} \leq \frac{u_i(C_j)}{u_i(C_j)}~(\exists g\in M_j)$.
This implies that $u_i(M_j)>0$.
We choose an indivisible good $g^*$ such that $g^* \in \argmin_{g\in M_j:\, u_i(g)>0}\frac{u_j(g)}{u_i(g)}$.
Then we have $\frac{u_j(g^*)}{u_i(g^*)} \leq \frac{u_j(M_j)+u_j(C_j)}{u_i(M_j)+u_i(C_j)} = \frac{u_j(A_j)}{u_i(A_j)}$ if $u_i(C_j)>0$, and $\frac{u_j(g^*)}{u_i(g^*)} \leq \frac{u_j(M_j)}{u_i(M_j)} = \frac{u_j(M_j)}{u_i(A_j)}$ otherwise.
Assume that the EF1M condition is not satisfied.
We have $u_i(A_j \setminus \{g^*\})>u_i(A_i)~(\geq 0)$, and then $u_j(A_j\setminus \{g^*\}) >0$ since $\cA$ is PO.
Let $\cA'$ be an allocation obtained from $\cA$ by transferring $g^*$ to agent $i$.
In a similar way as the former case, we can see that the Nash welfare in $\cA'$ is improved over $\cA$.
Therefore, the EF1M criterion holds.
\end{proof}

We can see that there always exists an MNW allocation of mixed goods in a similar way to the proof of Lemma~\ref{lem:exist Phi-fair}.
This together with Theorem~\ref{thm:EF1M} implies the following.
\begin{corollary}
When all agents have additive valuations, there always exists an EF1M and PO allocation for mixed goods.
\end{corollary}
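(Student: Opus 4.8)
The plan is to reduce to the two ingredients already available: the existence of a maximum Nash welfare allocation and Theorem~\ref{thm:EF1M}. So the only thing left to establish is that an MNW allocation for mixed goods exists at all; once we have one, Theorem~\ref{thm:EF1M} immediately gives that it is EF1M and PO.

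To build an MNW allocation, I would first note that there are only finitely many (at most $n^m$) partitions $\cM=(M_1,\dots,M_n)$ of the indivisible goods. Fix one such $\cM$ and consider how to allocate the divisible goods $C$ so as to maximize, first, the number of agents $i$ with $u_i(M_i)+u_i(C_i)>0$ and, subject to that, the product $\prod_{i:\,u_i(A_i)>0} u_i(A_i)$. Since each $u_i$ is non-atomic over $C$, the set of achievable utility profiles $\{(u_1(C_1),\dots,u_n(C_n)) : (C_1,\dots,C_n)\text{ a partition of }C\}$ is compact and convex by the Dvoretzky--Wald--Wolfowitz (Lyapunov) convexity theorem, and this remains true after the fixed translation by $(u_1(M_1),\dots,u_n(M_n))$; moreover an agent $i$ with $u_i(M_i)=0$ can be made positively valued precisely when $u_i(C)>0$, and all such agents can be made positive simultaneously using Lemma~\ref{lem:sw}. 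Hence the objective is attained over this compact set --- which is exactly the existence statement for divisible goods of Segal-Halevi~\cite{Segal-Halevi2019}, applied with the constant offsets $u_i(M_i)$. Then I would take, over the finitely many $\cM$, the pair $(\cM,\cC)$ whose combined profile lexicographically maximizes (number of positively valued agents, then Nash welfare of those agents); a finite maximum is attained, and by construction the resulting allocation $\cA$ of $E$ is an MNW allocation, because every mixed allocation restricts to some pair $(\cM,\cC)$ that was compared against $\cA$. Applying Theorem~\ref{thm:EF1M} to $\cA$ finishes the proof.

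I expect the only subtle point to be the attainment of the Nash welfare maximum over divisible allocations (with the offsets coming from $\cM$): a priori the supremum could be approached but not reached, and this is ruled out by the compactness of the utility-profile set of a non-atomic cake-cutting instance, which is the content underlying~\cite{Segal-Halevi2019}. Everything else is routine bookkeeping. This parallels the argument of Section~\ref{sec:phi} for $\Phi$-fair allocations, the difference being that an MNW allocation need not be utilitarian optimal, so here one ranges over \emph{all} allocations of $M$ rather than only the utilitarian optimal ones.
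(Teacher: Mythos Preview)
Your proposal is correct and follows essentially the same route as the paper: enumerate the finitely many allocations $\cM$ of the indivisible goods, for each one invoke the existence of an MNW allocation of the divisible part (with the offsets $u_i(M_i)$), take the overall best, and then apply Theorem~\ref{thm:EF1M}. The paper states this in one sentence, citing~\cite{Segal-Halevi2019} for the divisible existence step, whereas you unpack that step via Lyapunov/Dvoretzky--Wald--Wolfowitz compactness; your remark that here one must range over \emph{all} $\cM$ (not just utilitarian optimal ones, as in Section~\ref{sec:phi}) is exactly the adaptation the paper has in mind. One small quibble: Lemma~\ref{lem:sw} is stated for two agents, so it does not literally give that all zero-offset agents with $u_i(C)>0$ can be made positive \emph{simultaneously}; that step is immediate from non-atomicity (or is already subsumed in~\cite{Segal-Halevi2019}), so this does not affect the validity of your argument.
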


\subsection{Computation of relaxed envy-free allocations}
In the Robertson-Webb model~\cite{Robertson1998}, which is a standard query model in the literature, it remains open whether an EFM allocation can be found in finite steps.
However, an EF1M allocation can be found in finite steps as follows.

Let $\cC$ be an EF allocation of $C$, which can be found in finite steps~\cite{Aziz2016}.
Let $\cM$ be an EF1 allocation of $M$, which is found in polynomial time~\cite{Lipton2004}.
Let $\cA$ be an allocation defined by $A_i = M_i \cup C_i$ ($i \in N$).
For any agents $i, j$, if $M_j=\emptyset$, then it holds that $u_i(A_i)\geq u_i(C_i) \geq u_i(C_j) = u_i(A_j)$; if $M_j \neq \emptyset$ and $u_i(A_i)<u_i(A_j)$, then $u_i(M_i)< u_i(M_j)$ holds, and there exists an indivisible good $g \in M_j$ such that $u_i(M_i) \geq u_i(M_j\setminus \{g\})$, which implies that
\begin{align*}
    u_i(A_i) = u_i(C_i)+u_i(M_i)
    \geq u_i(C_j)+u_i(M_j \setminus \{g\}) = u_i(A_j \setminus \{g\}).
\end{align*}
Therefore, $\cA$ is EF1M.

We remark that we can also find an EFXM allocation for the binary and linear valuations by combining existing results.
Bei et al.~\cite{Bei2021} mentioned that we can find an EFXM allocation by using their EFM algorithm whenever we have an algorithm to compute an EFX allocation for indivisible goods, and this can be done for binary valuations~\cite{Barman2018}.
Moreover, they also described that their EFM algorithm can run in polynomial time when valuations are linear.
Therefore, we can compute an EFXM allocation in polynomial time.

\section{Discussion}\label{sec:discussion}

Very recently, Li et al.~\cite{Li2024} introduced relaxed fairness notions based an indivisibility ratio.
The \emph{indivisibility ratio} of agent $i \in N$ is defined as $\alpha_i = \frac{u_i(M)}{u_i(M)+u_i(C)}$.
An allocation $\cA$ is said to be \emph{envy-free up to $\alpha$-fraction of one good} (EF$\alpha$) if for any agents $i, j\in N$, either $u_i(A_i) \geq u_i(A_j)$ or $u_i(A_i) \geq u_i(A_j) - \alpha_i \cdot u_i(g)$ for some $g \in M_j$.
This coincides with the EF criterion when $M=\emptyset$, and EF1 when $C=\emptyset$.
Li et al.~\cite{Li2024} showed that an EF$\alpha$ and PO allocation always exists when $n=2$, while no EF$\alpha$ allocation exists when $n\geq 3$.
We note that an MNW allocations is not necessarily EF$\alpha$ as we can see in the following example.
A characterization of MNW allocation satisfying EF$\alpha$ is open.
\begin{example}[MNW may not imply EF$\alpha$]
    Consider an instance where $N=\{1,2\}$, $M=\{a,b\}$ and $C=\{c\}$.
    The valuations are defined as $u_i(g)=1$ for $i\in N$ and $g\in M$, and $u_1(c)=1$ and $u_2(c)=0$.
    Then $\alpha_1 = \frac{2}{3}$ and $\alpha_2 = 1$.
    An allocation $\cA=(\{c\},\{a,b\})$ maximizes the Nash welfare, but $\cA$ is not EF$\alpha$ because $u_1(A_1)=1 < u_1(A_2)-\alpha_1 u_1(g) = \frac{4}{3}$ for any $g\in M_2$.
\end{example}

Li et al.~\cite{Li2024} also showed that any MNW allocation for mixed goods satisfies a relaxed proportionality called PROP$\alpha$: an allocation $\cA$ is said to satisfy \emph{proportionality up to $\alpha$-fraction of one good} (PROP$\alpha$) if for any agent $i\in N$, there exists an indivisible good
$g\in M\setminus M_i$ such that $u_i(A_i) + \alpha_i u_i(g) \geq (u_i(M)+u_i(C))/n$, where 
$\alpha_i = \frac{u_i(M)}{u_i(M)+u_i(C)}$ is a parameter called indivisibility ratio.
This notion generalizes the proportionality (PROP) and the proportionality up to one good (PROP1)\footnote{An allocation $\cA$ is said to satisfy PROP if $u_i(A_i)\geq u_i(E)/n$ for any $i\in N$, and satisfy PROP1 if for any $i\in N$, $u_i(A_i)\geq u_i(E)/n$ or $u_i(A_i) + u_i(g)\geq u_i(E)/n$ for some $g \in M\setminus A_i$}.
It is known that EF (EF1) implies PROP (PROP1) by definition, and EFM implies PROP$\alpha$~\cite{Li2024}.
On the other hand, we show that PROP$\alpha$ and EF1M do not imply each other when both $M$ and $C$ are nonempty.
In the following examples, let $\varepsilon$ be a sufficiently small positive number.
\begin{example}[PROP$\alpha$ but not EF1M]
    Consider an instance where $N=\{1,2,3\}$, $M=\{a,b\}$ and $C=\{c\}$.
    The agents have an identical valuation $u$ such that $u(a)=1-2\varepsilon$, $u(b)=\varepsilon$, and $u(c)=\varepsilon$.
    Then $u(M)+u(C)=1$ and $\alpha_i = 1-\varepsilon$ ($i=1,2,3$).
    An allocation $\cA=(\{a,b\},\{c\},\emptyset)$ is PROP$\alpha$ because $u(A_1) \geq \frac{1}{3}$ and $u(A_i) + \alpha_i u(a) \geq (1-\varepsilon)(1-2\varepsilon) \geq \frac{1}{3}$ for $i=2,3$.
    On the other hand, $\cA$ is not EF1M because the envy of agent $3$ toward agent $1$ cannot be eliminated by removing one good from agent $1$'s bundle.
\end{example}
\begin{example}[EF1M but not PROP$\alpha$]
    Consider an instance where $N=\{1,2\}$, $M=\{a\}$ and $C=\{c\}$, where $c$ is homogeneous.
    The agents have an identical valuation $u$ such that $u(a)=\varepsilon$, and $u(c)=1-\varepsilon$.
    Then $u(M)+u(C)=1$ and $\alpha_1=\alpha_2 = \varepsilon$.
    Let $\cA=(\{a,[0,1/2)\}, \{[1/2,1)\})$.
    This allocation is EF1M because $u(A_1)>u(A_2)$ and the envy of agent $2$ toward agent $1$ can be eliminated by removing the indivisible good $a$.
    On the other hand, $\cA$ is not PROP$\alpha$ because $u(A_2) + \alpha_2 u(a) = \frac{1-\varepsilon + 2\varepsilon^2}{2} < \frac{1}{2}$.
\end{example}
Therefore, the results for PROP$\alpha$ and EF1M have no direct relationship.
Further investigation into relaxations of envy-freeness and proportionality and their relationships for mixed goods is a potential direction for future work.

\section*{Acknowledgments.}
We would like to thank Yasushi Kawase for the comments to improve this paper.
The second author was supported by JSPS KAKENHI Grant Numbers JP17K12646, JP21K17708, and JP21H03397, and JST ERATO Grant Number JPMJER2301, Japan.

\bibliographystyle{abbrvnat}
\bibliography{ref}

\appendix

\section{Relationship of MNW allocations and \texorpdfstring{$\Phi$-fairness}{Φ-fairness}}\label{sec:MNW}

In this section, we describe that for binary linear valuations, there exists a symmetric strictly convex function $\Phi$ such that an utilitarian optimal allocation is $\Phi$-fair if and only if it is an MNW allocation.
Our construction is simpler than that of Kawase et al.~\cite{hybrid}.
Let us fix the fair allocation instance.
We define a function $f \colon \R \to \R$ by
\begin{align}
    f(x) = \begin{cases}
        -\log(x) & (x \geq \frac{1}{(n|E|)^n}) \\
        x^2 - ((n|E|)^n +\frac{2}{(n|E|)^{n}})x + n\log(n|E|) + \frac{1}{(n|E|)^{2n}}+1 & (x < \frac{1}{(n|E|)^n}). 
    \end{cases}
\end{align}
We can see that $f$ is strictly convex function.
Then we set $\Phi(z)=\sum_{i\in N} f(z_i)$, which is a symmetric function.

Let us see a necessary condition for $\Phi$-fair allocations.
By the following lemma, we only need to focus on utility vectors whose elements are zero or at least $1/n$.
\begin{lemma}[Kawase et al.~\cite{hybrid}]\label{lem:lower}
    In every $\Phi$-fair allocation for binary linear valuations, each agent's utility is zero or a positive rational value at least $1/n$.
    The same property holds for every MNW allocation.
\end{lemma}
Let $x$ and $y$ be two such vectors.
Assume that $x$ has $s$ zero elements, $y$ has $t$, and $t < s$.
By a simple calculation, it holds that
\begin{align*}
    \Phi(y)-\Phi(x) &= \sum_{i\in N: y_i\geq 1/n} f(y_i) - \sum_{i\in N: x_i \geq 1/n} f(x_i) +\sum_{i\in N: y_i=0} f(0) - \sum_{i\in N: x_i=0} f(0) \\
    &\leq -(n-t)\log\frac{1}{n} + (n-s)\log|E| -(s-t)\left(n\log(n|E|) + \frac{1}{(n|E|)^{2n}}+1\right)\\
    &= \log (n^{n-t-(s-t)n} |E|^{n-s-(s-t)n} ) -(s-t)\left( \frac{1}{(n|E|)^{2n}}+1\right)\\
    &< \log (|E|^{-1} ) -\left( \frac{1}{(n|E|)^{2n}}+1\right) < 0,
\end{align*}
where the first inequality holds by $x_i \leq |E|$ ($\forall i\in N$), and the last one holds by $s-t\geq 1$.
Therefore, the value of $\Phi$ is minimized at allocations with the smallest number $s^*$ of agents having zero utilities.
Let $\mathcal{S}$ be the set of utilitarian optimal allocations that satisfy the necessary condition in Lemma~\ref{lem:lower} and have exactly $s^*$ agents with zero utility.
All MNW allocations and $\Phi$-fair allocations must belong to $\mathcal{S}$.

By the setting of $\Phi$, we have $\Phi(z(\cA)) = - \sum_{i\in N: u_i(A_i) >0} \log (u_i(A_i)) + s^* \cdot f(0)$ for any allocation $\cA \in \mathcal{S}$.
This implies that minimizing $\Phi$ among allocations in $\mathcal{S}$ is exactly maximizing Nash welfare.
Therefore, an allocation $\cA$ is $\Phi$-fair if and only if $\cA$ is an MNW allocation.

\section{Proof of Theorem~\ref{thm:phi-equiv mixed goods}}\label{sec:appendix}

In this section, we prove Theorem~\ref{thm:phi-equiv mixed goods}.
Throughout this section, we assume that the agents' valuations are binary over $M$ and identical over $C$, and also both $M$ and $C$ are nonempty.
Since valuations over $C$ are identical, we say that a piece is positive if the valuation for the piece is positive.

Let $\hat{C}$ be a singleton of a homogeneous divisible good such that all the agents value $\hat{C}$ as $u_1(C) = u_2(C)=\dots=u_n(C)$.
Lemma~\ref{lem:identical homogeneous} implies that the set of utility vectors for $M\cup C$ coincides with that for $M\cup \hat{C}$.
Since $\Phi$-fairness is determined only by a utlity vector, if an allocation $\cA$ of $M\cup C$ is $\Phi$-fair, then the corresponding allocation $\hat{\cA}$ of $M\cup \hat{C}$, which has the same utility vector, is $\Phi$-fair, and vice versa.
The same relationship holds for leximin and MNW allocations.
Therefore, we assume that $C$ consists of a single divisible good and valuations over $C$ are linear in the remainder of this section.

A key tool in the proof is the \emph{water-filling} procedure.
For a fixed allocation $\cM$ of indivisible goods, the water-filling procedure sequentially allocates the divisible good to the agents with the smallest utility at an equal rate until the divisible good is fully allocated, or until the utility of the agents who receives a nonempty piece reaches the second smallest utility.
If the latter case occurs, the set of agents with the smallest utility is enlarged, and the procedure proceeds with allocation in the same way.
See also~\cite{Li2023} for the detailed description.

Let $\cA$ be an allocation obtained by the water-filling procedure to an allocation $\cM$ of indivisible goods.
We denote by $D$ the set of agents who receive a positive pieces.
Then it is not difficult to see the following holds:
\begin{enumerate}
    \item $z(\cA)_i = z(\cM)_i \geq z(\cA)_j > z(\cM)_j$ for any $i\in N\setminus D$ and $j\in D$, and 
    \item $z(\cA)_i=z(\cA)_j$ for any $i,j \in D$.
\end{enumerate}
We call these properties the \emph{water-filling property}.

Li et al.~\cite{Li2023} show that a leximin (and also MNW) allocation can be found by the water-filling procedure.
\begin{lemma}[Li et al.~\cite{Li2023}]\label{lem:leximin mixed goods}
    Suppose that agents's valuations are binary over $M$ and identical over $C$. 
    There exists a leximin allocation $\cA$ of mixed goods such that $\cA$ is obtained by applying the water-filling procedure to a leximin allocation $\cM$ of indivisible goods.
\end{lemma}

We will show Theorem~\ref{thm:phi-equiv mixed goods} by using this fact.
For a vector $x$, we denote by $x^\uparrow$ the vector obtained from $x$ by rearranging its components in the increasing order.
We call two vectors $x,y\in\R^n$ \emph{value-equivalent} if $x^\uparrow=y^\uparrow$.
We first show that an allocation that is not leximin can be modified so that an agent with smaller utility can receive a piece or an indivisible good from an agent with larger utility.
This result is an extension of the one for indivisible goods by Benabbou et al.~\cite{Benabbou2021}.
\begin{lemma}\label{lem:decrease phi}
    Suppose that agents's valuations are binary over $M$ and identical over $C$. 
    Let $\cA$ be any utilitarian optimal allocation that is not leximin. 
    Then there is another utilitarian optimal allocation $\hat{\cA}$ such that
    either 
    \begin{enumerate}
        \item $z(\hat{\cA})^\uparrow=z(\cA)^\uparrow+\varepsilon(\chi_i-\chi_j)$ for some $\varepsilon \in (0,1)$ and $i,j\in [n]$ such that $(z(\cA)^\uparrow)_j > (z(\cA)^\uparrow)_i + \varepsilon$; or 
        \item $z(\hat{\cA})^\uparrow=z(\cA)^\uparrow+\chi_i-\chi_j$ for some $i,j \in [n]$ with $(z(\cA)^\uparrow)_j > (z(\cA)^\uparrow)_i + 1$.
    \end{enumerate}
\end{lemma}
\begin{proof}    
    Let $\cA^*=(M^*_1\cup C^*_1, \dots, M^*_n\cup C^*_n)$ be a leximin allocation indicated in Lemma~\ref{lem:leximin mixed goods} and 
    let $\cM^*$ of indivisible goods in $\cA^*$.
    Note that $\cA^*$ satisfies the water-filling property, and $C^*_i \neq \emptyset$ if and only if $u_i(C^*_i)>0$ for each $i\in N$.
    
    Let $\cA'=(M'_1\cup C'_1, \dots, M'_n\cup C'_n)$ be an allocation that minimizes 
    \begin{align}\label{eq:symmetric}
        \sum_{i\in N}|M''_i \triangle M^*_i|
    \end{align}
    over the utilitarian optimal allocations $\cA''$ with $z(\cA'')^\uparrow=z(\cA)^\uparrow$.
    Without loss of generality, we let $\cA'$ satisfy~\eqref{eq:allocation}, i.e., $C'_i \neq \emptyset$ if and only if $u_i(C'_i) > 0$ for any $i\in N$.
    In the remainder of the proof, we may also assume that $\cA'$ satisfies the water-filling property.
    If not, there exists agents $i,j$ such that $u_j(C'_j) > 0$ but $u_j(A'_j) > u_i(A'_i)$.
    In this case, setting $\varepsilon \in (0,\min\{1, u_j(A'_j)-u_i(A'_i), u_j(C'_j)\})$, we can obtain another utilitarian optimal allocation $\cA''$ from $\cA'$ by transferring a piece in $C'_j$ with value $\varepsilon$ to agent $i$.
    The existence of such a piece follows from the divisibility.
    Thus, the first statement of the lemma applies.

    By rearranging the indices, we assume that $u_1(A'_1) \leq u_2(A'_2) \leq \dots \leq u_n(A'_n)$, and that $u_i(M^*_i) \leq u_j(M^*_j)$ for any two agents $i < j$ with $u_i(A'_i)=u_j(A'_j)$.
    Correspondingly, we set $u_{\ell_1}(A^*_{\ell_1}) \leq u_{\ell_2}(A^*_{\ell_2}) \leq \dots \leq u_{\ell_n}(A^*_{\ell_n})$.

    Let $k$ be the minimum index such that $u_k(A'_k) < u_{\ell_k}(A^*_{\ell_k})$.
    Since $\cA^*$ is leximin, $u_h(A'_h) = u_{\ell_h}(A^*_{\ell_h})$ for $h \in [k-1]$.
    If $u_h(A'_h) \geq u_h(A^*_h)$ for all $h\in [k]$, then the $k$th smallest value in $z(\cA')$, i.e., $u_k(A'_k)$, is at least that in $z(\cA^*)$, i.e., $u_{\ell_k}(A^*_{\ell_k})$, and this contradicts the choice of $k$.
    Thus, we choose the minimum index $i \in [k]$ such that
    $$u_i(A'_i) < u_i(A^*_i).$$
    Note that for each $h \in [i-1]$, we have 
    $u_{\ell_h}(A^*_{\ell_h}) = u_h(A'_h) \geq u_h(A^*_h)$.
    We see inductively that
    $$u_{\ell_h}(A^*_{\ell_h}) = u_h(A'_h) = u_h(A^*_h)$$
    for each $h \in [i-1]$.

    It holds that $u_i(M'_i) +u_i(C'_i) < u_i(M^*_i)+u_i(C^*_i)$.
    We have the following cases.

    \medskip

    \noindent
    \textbf{Case 1, $u_i(C'_i) < u_i(C^*_i)$.}
    In this case, $C^*_i$ is nonempty.
    We denote by $D'$ the set of agents who receives a positive piece under $\cA'$, i.e., $D'=\{j \in N \mid u_j(C'_j)>0\}$.
    By the water-filling property of $\cA'$, we see that $u_i(A'_i) \geq u_j(A'_j)$ for any $j\in D'$.
    Since $u_i(C^*_i)>0$, agent $i$ has the smallest utility in $\cA^*$ by the water-filling property.
    Then $u_j(A'_j) \leq u_i(A'_i) < u_{i}(A^*_{i}) \leq u_j(A^*_j)$ holds for any $j\in D'$, and we have
    \begin{align*}
        \sum_{j\in D'} u_j(M^*_j) -\sum_{j\in D'} u_j(M'_j) & >  \sum_{j\in D'} u_j(C'_j) -\sum_{j\in D'} u_j(C^*_j) \geq 0,
    \end{align*}
    where the last inequality holds because $C$ is completely distributed among agents in $D'$ under $\cA'$.
    Because the valuations over $M$ are binary and $\cA'$ is utilitarian optimal, there exists an indivisible good $g \in (\bigcup_{j\in D'} M^*_j) \setminus (\bigcup_{j\in D'} M'_j)$.
    Let $j^*$ and $j'$ be the agents who have $g$ under $\cA^*$ and $\cA'$, respectively.
    Note that $j^* \in D'$ and $j' \notin D'$.
    Here, by the water-filling property of $\cA'$, it holds that $u_{j^*}(A'_{j^*}) \leq u_{j'}(A'_{j'})$.

    \noindent
    \textbf{Case 1-1.}
    Suppose that $u_{j^*}(A'_{j^*}) + 1 \geq u_{j'}(A'_{j'})$.
    We show that this case leads to a contradiction.
    \begin{itemize}
        \item If $u_{j^*}(C'_{j^*})\geq 1$, then we transfer $g$ to agent $j^*$ and one unit value in $C'_{j^*}$ to agent $j'$. This decreases the value of~\eqref{eq:symmetric} without changing the utility vector, which contradicts the choice of $\cA'$.

        \item If $u_{j^*}(C'_{j^*})<1$, then $u_{j^*}(A'_{j^*}) = u_{j'}(A'_{j'}) - 1+ u_{j^*}(C'_{j^*})$ because $u_{j'}(A'_{j'})$ is integral and $0<u_{j^*}(C'_{j^*}) \leq u_{j'}(A'_{j'}) - u_{j^*}(M'_{j^*}) \leq 1+ u_{j^*}(C'_{j^*}) < 2$.
        Thus, we transfer $g$ to agent $j^*$ and $C'_{j^*}$ to agent $j'$. 
        The utility of agent $j^*$ is changed to $u_{j^*}(A'_{j^*})-u_{j^*}(C'_{j^*})+1 = u_{j'}(A'_{j'})$, and that of agent $j'$ is changed to $u_{j'}(A'_{j'}) -1 +u_{j'}(C'_{j^*}) = u_{j'}(A'_{j'}) -1 +u_{j^*}(C'_{j^*}) = u_{j^*}(A'_{j^*})$.
        This decreases the value of~\eqref{eq:symmetric} without changing the utility vector up to value-equivalence, which contradicts the choice of $\cA'$.
    \end{itemize}
    
    \noindent
    \textbf{Case 1-2.}
    The remaining case is when $u_{j^*}(A'_{j^*}) + 1 < u_{j'}(A'_{j'})$.
    In this case, we can obtain another utilitarian optimal allocation $\cA''$ such that $z(\cA'') = z(\cA') + \chi_{j^*} - \chi_{j'}$ by transferring $g$ to agent $j^*$, and hence the second statement of the lemma applies.

    \medskip

    \noindent
    \textbf{Case 2, $u_i(C'_i) \geq u_i(C^*_i)$.}
    In this case, $u_i(M'_i) < u_i(M^*_i)$, and there exists $g_1\in M^*_i \setminus M'_i$.
    Let $i_1$ be the agent who has $g_1$ under $\cA'$, i.e., $g_1\in M'_{i_1}$.

    By the choice of $\cA^*$, the allocation $\cM^*$ is leximin for indivisible goods.
    Combining Theorem~\ref{thm:Phi-either type} and Lemma~\ref{lem:canonical} for \emph{indivisible goods}, we can obtain a partition $(N^1,\dots,N^q)$ of agents, integers $\beta_1 > \dots>\beta_q$, and a partition $(M^1,\dots,M^q)$ of $M$ (defined in~\eqref{eq:canonical indivisible}) such that
    \begin{itemize}
        \item $\beta_k \geq u_j(M^*_j) \ge \beta_k-1$ for each agent $j\in N^k$ and $k \in [q]$, and
        \item for $k \in [q]$, only agents in $N^k$ receive goods in $M^k$.
    \end{itemize}
    For each agent $i' \in N$, we denote by $s(i')$ the index in $[q]$ such that $i' \in N^{s(i')}$.

    \begin{description}
        \item[Case 2-1.] If $u_{i_1}(A'_{i_1}) >u_i(A'_i)+1$, then the second statement of the lemma applies in a similar way to Case 1-2.
        \item[Case 2-2.] If $u_{i_1}(A'_{i_1}) = u_i(A'_i)+1$, then without changing the utility vector up to value-equivalence, we can decrease the value of~\eqref{eq:symmetric} by transferring $g_1$ to agent $i$, which contradicts the choice of $\cA'$.
    \end{description}

    \noindent
    \textbf{Case 2-3.}
    Suppose that $u_{i_1}(A'_{i_1}) < u_i(A'_i)+1$.
    We will show that $u_{i_1}(M'_{i_1}) \leq u_{i_1}(M^*_{i_1})$.
    
    Since $g_1$ is allocated to agent $i$ under $\cA^*$ and $i\in N^{s(i)}$, we see that $g_1\in M^{s(i)}$.
    By construction of $M^{s(i)}$, agents in the lower groups $N^{s(i)+1}, \dots, N^q$ value $g_1$ as $0$.
    Since $\cA'$ is utilitarian optimal, $g_1\in M'_{i_1}$ implies that $i_1$ also values $g_1$ as $1$.
    Hence, 
    \begin{align}\label{eq:phi equiv 2}
        i_1 \in N^1 \cup \dots \cup N^{s(i)}.
    \end{align}

    \noindent
    \textbf{Case 2-3-1.}
    First, we further assume that $u_i(C'_i)> 0$.
    By the water-filling property of $\cA'$, it holds that $u_{i_1}(A'_{i_1}) \geq u_i(A'_i)$.
    \begin{itemize}
        \item If $u_i(C'_i) \geq 1$, or $u_i(C'_i) < 1$ and $u_{i_1}(A'_{i_1})$ is integral, then we can decrease the value of~\eqref{eq:symmetric} without changing the utility vector (up to value-equivalence) in the same way as Case 1-1.
        \item The possible case is when $u_i(C'_i) < 1$ and $u_{i_1}(A'_{i_1})$ is not integral. 
        In this case, $u_{i_1}(C'_{i_1})>0$ and hence we have $u_{i_1}(A'_{i_1}) = u_i(A'_i)$ by the water-filling property of $\cA'$.
        By~\eqref{eq:phi equiv 2}, we see that $u_{i_1}(M^*_{i_1}) \geq u_i(M^*_i)$ or $u_i(M^*_i) = u_{i_1}(M^*_{i_1}) + 1$.
        We show that 
        $$u_{i_1}(M'_{i_1}) \leq u_{i_1}(M^*_{i_1})$$
        in either of the following two cases.
        \begin{itemize}
            \item Case $i>i_1$. We have $u_i(A^*_i)>u_i(A'_i) = u_{i_1}(A'_{i_1}) = u_{i_1}(A^*_{i_1})$ by the choice of $i$.
            By the water-filling property of $\cA^*$, we observe that $i_1 \in N^{s(i)}$, $C^*_i = \emptyset$, $u_i(M^*_i) = \beta_{s(i)} = u_{i_1}(M^*_{i_1}) + 1$, and $u_{i_1}(C^*_{i_1}) <1$.
            Thus, since $u_{i_1}(M'_{i_1}) < u_i(A_i^*)$,
            it holds that $u_{i_1}(M'_{i_1}) \leq \beta_{s(i)}-1 = u_{i_1}(M^*_{i_1})$.

            \item Case $i<i_1$. By the ordering and $u_{i_1}(A'_{i_1}) = u_i(A'_i)$, we have $u_i(M^*_i) \leq u_{i_1}(M^*_{i_1})$.
            We also have $u_{i_1}(M'_{i_1}) \leq u_i(M'_i)$ because $u_{i_1}(A'_{i_1}) = u_i(A'_i)$, $u_{i_1}(C'_{i_1})>0$, and $u_i(C'_i)<1$.
            Recall that $u_i(M'_i) < u_i(M^*_i)$ by the assumption of Case 2.
            Combining these, we see that $u_{i_1}(M'_{i_1}) \leq u_{i_1}(M^*_{i_1})$.
        \end{itemize}
    \end{itemize}

    \noindent
    \textbf{Case 2-3-2.}
    Next, we assume that $u_i(C'_i) = 0$, which implies that $u_i(C^*_i)=0$.

    By construction, in $\cA^*$, there exists $\tilde{q}\in [q]$ such that all the agents in $N^{\tilde{q}+1} \cup \dots \cup N^q$ receive positive pieces, and those in $N^1\cup \dots \cup N^{\tilde{q}-1}$ receive an empty piece.
    Since $C^*_i=\emptyset$, we have $s(i) \leq \tilde{q}$.
    If $s(i) < \tilde{q}$, then $C^*_{i_1}=\emptyset$ because of~\eqref{eq:phi equiv 2}.
    Thus, the case $u_{i_1}(C^*_{i_1})>0$ occurs only when $i, i_1 \in N^{\tilde{q}}$.
    However, we show that the case $u_{i_1}(C^*_{i_1})>0$ cannot happen.
    
    Suppose that $u_{i_1}(C^*_{i_1})>0$ (and hence $i, i_1 \in N^{\tilde{q}}$).
    Each agent in $N^{\tilde{q}}$ has $\beta_{\tilde{q}}$ or $\beta_{\tilde{q}}-1$ indivisible goods under $\cM^*$. 
    By the water-filling property, the agents with $\beta_{\tilde{q}}-1$ indivisible goods cannot have more utility than the others in $N^{\tilde{q}}$ under $\cA^*$.
    Since $C^*_i=\emptyset$ but $C^*_{i_1}\neq \emptyset$, it holds that 
    \begin{align}
        \beta_{\tilde{q}} = u_i(M^*_i) > u_{i_1}(M^*_{i_1}) = \beta_{\tilde{q}}-1.
        \label{eq:phi equiv 1}
    \end{align}
    \begin{itemize}
        \item Case $i_1 < i$. We have $u_{i_1}(A'_{i_1}) = u_{i_1}(A^*_{i_1})$ by the choice of $i$.
        Moreover, since $u_i(C'_i)=u_i(C^*_i)=0$, we have $u_i(A'_i) = u_i(M'_i)<u_i(M^*_i) = u_i(A^*_i) = \beta_{\tilde{q}}$ by~\eqref{eq:phi equiv 1}.
        This together with the assumption $u_{i_1}(A'_{i_1}) < u_i(A'_i)+1$ implies that 
        $$u_i(A'_i) \leq \beta_{\tilde{q}} - 1 < u_{i_1}(A^*_{i_1}) = u_{i_1}(A'_{i_1}) < u_i(A'_i)+1 \leq \beta_{\tilde{q}}.$$
        Thus, $u_i(A'_i) < u_{i_1}(A'_{i_1})$ and $u_{i_1}(C'_{i_1}) >0$ must hold.
        However, this contradicts the water-filling property of $\cA'$.

        \item Case $i_1 > i$. It holds that $u_i(A'_i) \leq u_{i_1}(A'_{i_1}) < u_i(A'_i)+1$.
        By the water-filling property of $\cA'$ and $u_i(C'_i)=0$, it also holds that $u_{i_1}(C'_{i_1})=0$ or $u_i(A'_i) = u_{i_1}(A'_{i_1})$.
        When $u_{i_1}(C'_{i_1})=0$, it turns out $u_i(A'_i) = u_{i_1}(A'_{i_1})$ because $u_i(A'_i)$ is integral.
        Then by~\eqref{eq:phi equiv 1}, we see that $i_1$ must be smaller than $i$ by the choice of ordering, which is a contradiction.
    \end{itemize}
    Thus, the only possible case is when $u_{i_1}(C^*_{i_1})=0$. 
    \begin{itemize}
        \item Case $i_1 < i$. We have $u_{i_1}(A'_{i_1}) = u_{i_1}(A^*_{i_1})~(=u_{i_1}(M^*_{i_1}))$ by the choice of $i$.

        \item Case $i_1 > i$. We have $u_i(A'_i) \leq u_{i_1}(A'_{i_1}) < u_i(A'_i)+1$.
        By the water-filling property of $\cA'$ and $u_i(C'_i)=0$, it also holds that $u_i(A'_i) = u_{i_1}(A'_{i_1})$.
        If additionally $u_{i_1}(A'_{i_1})> u_{i_1}(A^*_{i_1})$, then 
        $u_{i_1}(A^*_{i_1}) < u_{i_1}(A'_{i_1})=u_i(A'_i) \leq u_{\ell_i}(A^*_{\ell_i})$ since $i \leq k$.
        Here, recall that $u_{\ell_h}(A^*_{\ell_h}) = u_h (A^*_h)$ for all $h\in [i-1]$.
        This implies that the $i$th smallest value in $z(\cA^*)$ is less than $u_{\ell_i}(A^*_{\ell_i})$, which contradicts the ordering.
        Thus, $u_{i_1}(A'_{i_1}) \leq u_{i_1}(A^*_{i_1})=u_{i_1}(M^*_{i_1})$.
    \end{itemize}

    Therefore, in both Cases 2-3-1 and 2-3-2, we conclude that $u_{i_1}(M'_{i_1}) \leq u_{i_1}(M^*_{i_1})$.
    Since $g_1 \notin M^*_{i_1}$, there exists another indivisible good $g_2 \in M^*_{i_1}\setminus M'_{i_1}$.
    Let $i_2$ be the agent who has $g_2$ under $\cA'$, i.e., $g_2\in M'_{i_2}$.

    Now we repeat the same argument in Case 2 by replacing $i_1$ with $i_2$ as follows.
    \begin{itemize}
        \item If Case 2-1 applies (i.e., $u_{i_2}(A'_{i_2}) > u_i(A'_i)+1$), then we transfer $g_2$ to agent $i_1$ and $g_1$ to agent $i$, and the resulting allocation satisfies the second statement of the lemma.

        \item In Case 2-2, we can derive a contradiction in a similar way by transferring $g_2$ to agent $i_1$ and $g_1$ to agent $i$.

        \item In Case 2-3, we assume that $u_{i_2}(A'_{i_2}) < u_i(A'_i)+1$.
    Because $g_2 \in M'_{i_2} \cap M^*_{i_1}$, we observe that $i_2 \in N^1\cup \dots \cup N^{s(i_1)} \subseteq N_1\cup \dots \cup N^{s(i)}$ by~\eqref{eq:phi equiv 2}.
    We use this instead of~\eqref{eq:phi equiv 2}.
    In Case 2-3-1 ($u_i(C'_i)> 0$), if $u_i(C'_i) \geq 1$, or $u_i(C'_i) < 1$ and $u_{i_2}(A'_{i_2})$ is integral, then we can derive a contradiction in the same way as Case 1-1 by transferring $g_2$ to agent $i_1$, $g_1$ to agent $i$, and an appropriate part of agent $i$'s piece to agent $i_2$.
    Following the remaining discussions by just replacing $i_1$ with $i_2$, we conclude that $u_{i_2}(M'_{i_2}) \leq u_{i_2}(M^*_{i_2})$.
    \end{itemize}
        
    After we repeat the argument in Case 2-3 $t$ times, we obtain 
    a sequence $i_0=i, g_1, i_1, g_2, \dots, g_t, i_t$ of goods and agents satisfying $g_h \in (M^*_{i_{h-1}} \setminus M'_{i_{h-1}}) \cap M'_{i_h}$ for $h=1\dots, t$.
    If the same agent appears in the sequence, i.e., $i_b= i_{b'}$ with $b<b'\leq t$, then we can decrease the value of~\eqref{eq:symmetric} without changing the utility vector by transferring the indivisible good $g_{h+1}$ to agent $i_{h}$ for $h=b,b+1\dots,b'-1$, which contradicts the choice of $\cA'$.
    Since the number of agents is finite, the sequence will terminate at some agent $i_t$ with $u_{i_t}(A'_{i_t}) > u_i(A'_i) +1$ (Case 2-1).
    By transferring the good $g_h$ to agent $i_{h-1}$ for $h=1,\dots, t$, we obtain a utilitarian optimal allocation satisfying the second statement of the lemma.
\end{proof}

Now we are ready to prove Theorem~\ref{thm:phi-equiv mixed goods}.
\begin{proof}[Proof of Theorem~\ref{thm:phi-equiv mixed goods}]
    We assume $C\neq \emptyset$ since otherwise what we show is exactly Theorem~\ref{thm:Phi-either type}.
    Let $\Phi$ be any symmetric strictly convex function.

    \paragraph{1$\Rightarrow$2}
    Suppose that $\cA$ is $\Phi$-fair but not leximin.
    Let $x \coloneqq z(\cA)^\uparrow$.
    Then by Lemma~\ref{lem:decrease phi}, there exists a utilitarian optimal allocation $\hat{\cA}$ satisfying either statement in the lemma.
    If $z(\hat{\cA})^\uparrow=x+\varepsilon(\chi_i-\chi_j)$ for some $\varepsilon \in (0,1)$ and $i,j\in [n]$ such that $x_j > x_i + \varepsilon$,
    then in a similar way to~\eqref{eq:decrease Phi}, we see that
    \begin{align*}
        \Phi(z(\cA)) &= \Phi(x) \\
        &= \frac{\varepsilon}{x_j-x_i}\Phi(x - (x_j-x_i)(\chi_j-\chi_i))+(1-\frac{\varepsilon}{x_j-x_i})\Phi(x)\\
        &>\Phi\left(\frac{\varepsilon}{x_j-x_i}(x - (x_j-x_i)(\chi_j-\chi_i))+ (1-\frac{\varepsilon}{x_j-x_i})x\right)\\
        &=\Phi(x+\varepsilon(\chi_i-\chi_j)) = \Phi(z(\hat{\cA})).
    \end{align*}
    We have $\Phi(z(\cA))>\Phi(z(\hat{\cA}))$ also in the other case.
    Thus, we have a contradiction to $\Phi$-fairness of $\cA$, and hence $\cA$ is leximin.

    \paragraph{2$\Rightarrow$1}
    By the proof of 1$\Rightarrow$2, any $\Phi$-fair allocation is leximin.
    Since the utility vector of an leximin allocation is unique up to value-equivalence, any leximin allocation has the same utility vector as a $\Phi$-fair allocation up to value-equivalence.
    Since the function value of $\Phi$ depends only on the utility vector and $\Phi$ is symmetric, any leximin allocation is also $\Phi$-fair.

    \paragraph{2$\iff$3}
    Let $\Phi'(z)=-\sum_{i\in N}\log (z_i)$.
    Since we assume that $C\neq \emptyset$ and the valuations over $C$ is identical, all the agents have positive utility in MNW allocations.
    If $\cA$ is not utilitarian optimal, then there exist an indivisible good $g$ such that $g$ is given to an agent who values it as $0$.
    This implies that $\cA$ is not an MNW allocation.
    Thus, $\cA$ maximizes the Nash welfare if and only if $\cA$ is $\Phi'$-fair.
    Because we have proven 1$\iff$2 for any $\Phi$, 
    we see that $\cA$ is leximin if and only if $\cA$ is $\Phi'$-fair.
    Therefore, the equivalence 2$\iff$3 holds.
\end{proof}

\end{document}